\newtheorem{thm}{Theorem}[section]
\newtheorem{lema}[thm]{Lemma}
\newtheorem{corollary}[thm]{Corollary}
\newtheorem{prop}[thm]{Proposition}
\newtheorem{ex}[thm]{Example}
\theoremstyle{definition}
\newtheorem{rmk}[thm]{Remark}
\newenvironment{prooftripod}{\noindent {\textit{Proof of Proposition 4.2.}}}{$\square$ \vspace{3mm}}
\newcommand{\N}{\mathcal{N}}
\newcommand{\LL}{\mathcal{L}}
\newcommand{\MM}{\mathcal{M}}
\newcommand{\F}{\mathcal{F}}
\newcommand{\ZZ}{\mathbb{Z}}
\newcommand{\CC}{\mathbb{C}}
\newcommand{\Int}{\mathrm{Int}}
\newcommand{\op}[2]{#1 \cdot #2}
\newcommand{\ddd}[1]{\mathrm{diag}(#1)\,}
\newcommand{\X}{\texttt{X}}
\renewcommand{\a}{\texttt{A}}
\renewcommand{\c}{\texttt{C}}
\newcommand{\g}{\texttt{G}}
\renewcommand{\t}{\texttt{T}}
\newcommand{\ba}{\bar{\texttt{A}}}
\newcommand{\bc}{\bar{\texttt{C}}}
\newcommand{\bg}{\bar{\texttt{G}}}
\newcommand{\bt}{\bar{\texttt{T}}}
\newcommand{\bp}{\bar{p}}
\newcommand{\rk}{\mathbf{rk}\,}
\newcommand{\rank}{\mathrm{rank}\,}
\newcommand{\Hom}{\mathrm{Hom}\,}
\renewcommand{\Im}{\mathrm{Im}\,}
\definecolor{pink}{rgb}{1,0,1}
\definecolor{garnet}{RGB}{210,15,30}
\title{Phylogenetic Invariants: From the General Markov to Equivariant Models}
\author{Marta Casanellas, Jes\'us Fern\'andez-S\'anchez\footnote{Universitat Polit{\`e}cnica de Catalunya and Centre de Recerca Matem\`atica. ETSEIB, Av.~Diagonal 647, 08028 Barcelona. E-mail addresses: \texttt{marta.casanellas@upc.edu} and \texttt{jesus.fernandez.sanchez@upc.edu}}}
\date{}
\begin{document}

\maketitle

\abstract{
In the last decade, some algebraic tools have been successfully applied to phylogenetic reconstruction. 
These tools are mainly based on the knowledge of equations describing algebraic varieties associated to  Markov processes of molecular substitution on phylogenetic trees, the so called \emph{phylogenetic invariants}. Although the theory involved allows for the explicit determination of these equations for all \emph{equivariant models} (which include some of the most popular nucleotide substitution models), practical uses of these algebraic tools have been restricted to the case of the general Markov model. Arguably,  one of the reasons for this restriction is that knowledge of linear representation theory is required before making these equations explicit.

With the aim of enlarging the practical uses of algebraic phylogenetics, in this paper we prove that phylogenetic invariants for equivariant models can be derived from phylogenetic invariants for the general Markov model, without the need of representation theory. Our main result states that the algebraic variety corresponding to an equivariant model on a phylogenetic tree $T$ is an irreducible component of the variety corresponding to the general Markov model on $T$ intersected with the linear space defined by the model. We also prove that, for any equivariant model, those phylogenetic invariants that are relevant for practical uses (e.g. tree reconstruction) can be simply deduced from a single rank constraint on the matrices obtained by flattening the joint distribution at the leaves of the tree. This condition can be easily tested from singular values of the matrices and it allows us to extend our results from trees to phylogenetic networks.
}

\section{Introduction}
Phylogenetics aims at reconstructing the evolutionary history of a set of species (or other biological entities) from molecular data. This evolutionary history is usually represented by a \emph{phylogenetic tree} whose leaves stand for currently living species and whose interior nodes correspond to their ancestral species. Molecular data is commonly given as a sequence of characters representing nucleotides or amino acids and phylogenetic reconstruction is often done by modelling the substitution of these characters as a hidden Markov process on a phylogenetic tree.

In the late eighties, Cavender, Felsenstein, and Lake realized  that polynomial equations satisfied by the entries of the joint distribution of characters at the leaves of the tree could be used in phylogenetic reconstruction, see \cite{Cavender87,Lake1987}. By then, only  few polynomial equations were known and exclusively for very simple models such as the Kimura 2-parameter model \cite{Kimura1980}. The use of these equations known as \emph{phylogenetic invariants} was set aside until the beginning of the new century. In the last twenty years there has been a lot of effort to obtain phylogenetic invariants for different evolutionary models: Allman and Rhodes worked on obtaining equations for the general Markov model \cite{allman2004b, Allman2008}, Sturmfels and Sullivant provided phylogenetic invariants for group-based models \cite{Sturmfels2005}, Draisma and Kuttler generalized the work done by Allman and Rhodes to equivariant models (which include group-based models) in \cite{Draisma}, and many others contributed to specific models or trees (see for example \cite{CS}, \cite{ChifmanPetrovic}, \cite{michalek2013}, \cite{LS}).


The main tool that has allowed practical application of these phylogenetic invariants has been its translation into rank conditions of certain matrices arising from \emph{flattening} the joint distribution according to certain bipartitions of the set of leaves.  By the Eckart-Young theorem  (see \cite{Eckart1936}), the distance of a matrix to the set of matrices of a given rank can be easily computed from the least singular values.
This approach has been used in practice, at least for phylogenetic reconstruction based on quartets, in the case of Jukes-Cantor and GTR under the coalescent model in the SVDquartets method of species tree inference (see \cite{chifmankubatko2014}), and for the general Markov model in \cite{Erik2} and  \cite{casfergar23}. 
Although the general Markov model (which arises when no constraints are imposed on the transition probabilities or the root distribution of a hidden Markov process on a phylogenetic tree) is reasonable for nucleotide data, it {seems} too general when dealing with amino acid data {(each transition matrix involves 380 free parameters in this case).} The main obstacle to implementing the invariants found by Draisma and Kuttler \cite{Draisma} for any equivariant model is that some knowledge of linear representation theory is needed to {translate the rank conditions into explicit equations that can be evaluated on the empirical data}.
With the goal of making algebraic phylogenetics practical for equivariant models, in this work we present a novel approach to obtain phylogenetic invariants for any equivariant model from those of the general Markov model.


We describe our approach in what follows. If $G$ is any permutation subgroup of a set of $\kappa$ states  ($\kappa$=4 for nucleotides and 20 for amino acids), a $G$-equivariant model on a phylogenetic tree $T$ is defined by imposing that the transition matrices are $G$-equivariant (equivalently, they remain invariant by the permutation action of $G$ on rows and columns) and that the root distribution is invariant by permutations in $G$. These models include the well known Kimura with two (K80) and three parameters (K81) \cite{Kimura1981}, the Jukes-Cantor (JC69) \cite{JC69}, the strand symmetric model \cite{CS}, and the general Markov model (when $G$ is the trivial group). The set of distributions at the leaves of a phylogenetic tree $T$ that arise as a hidden Markov process on $T$ under the restrictions of a $G$-equivariant model lies in an algebraic variety $V_T^G$ defined as the Zariski closure of this set of distributions. Phylogenetic invariants mentioned above are polynomials in the ideal of these algebraic varieties.

For a given set of leaves $L$, the algebraic varieties $V_T^G$ for different phylogenetic trees $T$ with leaf set $L$ naturally lie in the same linear space $\mathcal{L}^G$, which contains those distributions that are invariant by the permutations in $G$. In the work \cite{Draisma}, the authors gave a procedure for obtaining the equations that define $V_T^G$ inside $\LL^G$ from the equations of tripod trees and rank conditions on block-diagonal flattening matrices. This involves a change of basis that requires some knowledge of representation theory and decomposing vector spaces into isotypic components. As proved in \cite{CF11} and in \cite{CFM}, this turns out to be a tedious task for each particular tree $T$ and permutation group $G$ and makes these rank conditions  impractical for phylogenetic inference. In contrast, it is very easy to obtain the equations of $\LL^G$ and the rank conditions of flattenings for the variety  $V_T$ corresponding to the general Markov model.
So one basic question in algebraic phylogenetics appears: as the parameters for a $G$-equivariant model correspond to linear constraints on the parameters of the general Markov model, it is natural to ask whether $V_T^G$ is a linear section of $V_T$. Actually, the proper question is the following:

\vspace*{1mm}
\textbf{Question 1:} Is $V_T^G$ equal to $V_T\cap \mathcal{L}^G$?
\vspace*{1mm}

If this question had a positive answer, then finding equations for $V_T^G$ would be a simple task. But the answer to Question 1 is negative in general (see section 3.2).
Nevertheless, in our main result (Theorem \ref{prop_main}) we prove that $V_T^G$ is an irreducible component of $V_T\cap \mathcal{L}^G$. This implies that equations of $V_T$ and $\mathcal{L}^G$ are enough for describing $V_T^G$.
The result is proven by adapting the proof of a result of Chang on the identifiability of parameters for the general Markov model on phylogenetic trees, see \cite{chang1996}.

{In \cite{CF11} we proved that Draisma-Kuttler equations from block-diagonal rank conditions (in a basis adapted to isotypic components) are enough to define, on an open set, the variety $V_T^G$ inside the union of all varieties $\cup_T V_T^G$ (where the union runs over all trivalent trees in the same set of leaves); in other words, these equations suffice to detect the tree $T$ once we know that the distribution has arisen under a $G$-equivariant process on some tree. In Theorem \ref{thm_edges} we prove that these equations  can be simply reduced to imposing rank $\leq \kappa$ on the usual flattening matrix.
One of the main consequences of this result is that there is no need to decompose the flattening into isotypic components to detect the variety $V_T^G$:  $(\kappa+1)\times(\kappa +1)$ minors on the usual flattening will be enough to define $V_T^G$ inside the union  $\cup_T V_T^G$ (on an open set).}
We prove Theorem \ref{thm_edges} for tensors in general (not only for those arising from processes on phylogenetic trees) and as a byproduct we obtain phylogenetic invariants for $G$-equivariant models on {certain}  phylogenetic networks. We expect that this might have consequences on the identifiability of phylogenetic networks.

Note that all results in this paper work for any number of states, which implies that they {could} be used to obtain phylogenetic invariants for amino acid $G$-equivariant models. 
{Although this approach for amino acid substitution models  has not yet been explored, there are some models with symmetries which hint that this may be an amenable approach both for amino acid or codon models \cite{Shore2020,Kubatko2016}.}

The organization of the paper is as follows. In section 2 we introduce notation and the preliminary material needed: Markov processes on trees, $G$-equivariant models, flattenings and phylogenetic algebraic varieties. In section 3 we motivate our work by exploring some basic examples (namely, tripods and quartets with JC69, K80 or K81 models); we give the first negative answer to Question 1 but we also shed some light on the study of $V_T\cap \LL^G$. In section 4 we prove our main result
Theorem \ref{prop_main} by using techniques from linear algebra following and generalizing the approach taken by Chang \cite{chang1996}. In section 5 we introduce techniques from representation theory that are needed to prove the result on flattenings Theorem \ref{thm_edges} and we derive invariants for phylogenetic networks.

\section*{Acknowledgments}
{We would like to thank Mateusz Michalek for comments that improved a preliminary version of this manuscript (which allowed fixing a gap in one of the proofs) and we would like to thank the anonymous reviewers for fruitful comments that led to major improvements.} Both authors were partially supported
by the project reference PID2023-146936NB-I00 financed by the Spanish State Agency MCIN/AEI/10.13039/501100011033/ FEDER, UE, by the Severo Ochoa and Mar\'{\i}a de Maeztu Program for Centers and Units of Excellence in R\&D (project CEX2020-001084-M), and by the AGAUR project 2021 SGR 00603 Geometry of Manifolds and Applications, GEOMVAP.

\section{Preliminaries}

Throughout this section we describe the main notation used in the paper. The concepts related to phylogenetic trees and Markov processes on trees can be found in the book \cite{SteelPhylogeny}.
%

Given a tree $T$, we write
$V(T)$ and $E(T)$ for the set of nodes and edges of $T$, respectively. 
The set $V(T)$ splits into the set of leaves $L(T)$ (nodes of degree one) and the set of interior nodes $\Int(T):
V(T) = L(T) \cup \Int(T)$. One says that a tree is \emph{trivalent} if each node in $\Int(T)$ has degree 3. 
%

A \emph{phylogenetic tree} is a tree $T$ without nodes of degree 2 (so that each interior node represents a speciation event), together with a bijection between its leaves and a finite set $L$ representing biological entities. We denote by $n$ the cardinality of $L$.
A tree $T$ is \emph{rooted} if it has a distinguished node $r$, called the \emph{root}, which induces an orientation on the edges of $T$.

\paragraph{Markov processes on trees} Let $\Sigma$ be a finite set of cardinality $\kappa$ which represents the alphabet of possible states. For instance, $\Sigma=\{\tt A,C,G,T\}$ represents the set of four nucleotides adenine, cytosine, guanine and thymine. On a tree $T$, we consider random variables at its nodes taking values in $\Sigma$ .

{Given a rooted phylogenetic tree $T$, a \emph{Markov process} on $T$ is given by a \emph{distribution} $\pi=(\pi_x)_{x\in \Sigma}$ of states at the root (i.e. $\pi_x\geq 0$ and $\sum_{x\in \Sigma} \pi_x=1$) and \emph{row-stochastic} matrices ${M^e}$ (i.e. $M_{i,j}^e\geq 0$ and for each $i$, $\sum_j M_{i,j}^e=1$)  assigned to each oriented edge  $e: u\rightarrow v$, where $u,v$ are adjacent nodes. The Markov assumption means that descendant variables of each node are independent of the non-descendant given the parent variable (see \cite{ASCB2005,Sullivant2018}).}
{The \emph{hidden} Markov process on $T$ is specified by a {polynomial} map $\phi_T$ which maps each collection of parameters $(\pi, ({M^e})_{e\in E(T)})$ to the joint distribution of characters at the leaves of $T$, $p^T=(p_{x_1,\ldots,x_n})_{x_1,\ldots,x_n\in \Sigma}$.
}
%

\begin{ex}[{Hidden} Markov process on the tripod]\label{ex_tripod}\rm Consider the tree $T$ with set of leaves $L=\{a,b,c\}$ as in Figure \ref{figure:tripod}. This tree is called a \emph{tripod} and a Markov process on it is specified by a distribution $\pi$ at the internal node (which plays the role of the root $r$) and by transition matrices $A,B,C$ at the directed edges from $r$ to the leaves.  Then
the components of $p^T=\phi_T(\pi,A,B,C)$ are
\begin{eqnarray}\label{eq:tripod}
	p_{x,y,z}=\sum_{i\in \Sigma} \pi_i \, A_{i,x} \, B_{i,y} \, C_{i,z} \qquad x,y,z\in \Sigma.
\end{eqnarray}
\end{ex}


\begin{figure}
\begin{center}
 \includegraphics[scale=0.5]{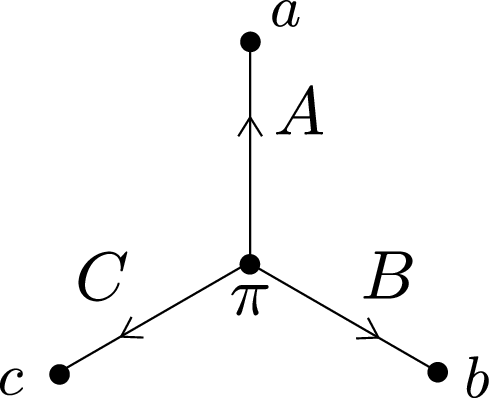}
 \caption{\label{figure:tripod} $T$ is the tripod with leaves labeled by $a,b,c$. A root distribution $\pi$ and transition matrices $A,B,C$ are attached to the root (interior node) and the edges, respectively. }
 \end{center}
\end{figure}

{Although in the probabilistic setting all vectors considered should represent distributions and hence should sum to one and be non-negative, 
it is useful to extend the setting to the complex field.
We will denote by $W$ the $\CC$-vector space $\CC^{\kappa}$ where we identify the standard basis with $\Sigma$, $W=\langle \Sigma \rangle_{\CC}$. This basis allows us to identify $W$ with its dual space $W^*$, and $\Hom(W,W)$ with $\kappa\times \kappa$-matrices (these identifications will be used throughout without further comment).
A \emph{normalized vector} in $W$ 
is a vector whose coordinates in this basis sum to one. The set of normalized vectors in $W$ will be denoted as $\mathcal{W}$.
In the same way, a \emph{normalized matrix} will be a $\kappa\times \kappa$ matrix whose rows sum to one and
$\MM$ will denote the set of all normalized $\kappa\times \kappa$ matrices.
%

We extend the polynomial map considered above and define $Par(T)=\mathcal{W} \times \prod_{e\in E(T)}\MM$ as the parameter space.
%
Then, the target space of the map $\phi_T$ is $\CC^{\kappa^n}$, which can be identified with $\LL:=\otimes^ n W$ (via the natural basis of $\otimes^n W$ given by $\Sigma$):
\begin{eqnarray*}
	\phi_T :  Par(T)  & \longrightarrow & \LL\\
	\big (\pi, \{{M^e}\}_{e\in E(T)} \big ) &  \mapsto & p^T = \sum_{x_1,\dots,x_n\in \Sigma} p_{x_1,\dots,x_n} x_1 \otimes \dots \otimes x_n.
\end{eqnarray*}
}

We shall write $V_T$ for the Zariski closure of this image, that is, the smallest algebraic variety containing it: $V_T=\overline{\Im \phi_T}$. Note that, as we are restricting to normalized parameters, $V_T$ is contained in the hyperplane defined by the  equation $\sum_{x_1,\dots,x_n}p_{x_1,\dots,x_n}=1.$

%
%
%

Parameters $(\pi,\{{M^e}\}_{e\in E(T)})$ are called \emph{non-singular} if $\pi_i\neq 0$ for all $i$ and $\det({M^e})\neq 0$ for all $e\in E(T)$.

The elements in the ideal $I(V_T)\subset R:=\CC[\{ p_{x_1,\ldots,x_n} \mid x_1,\ldots,x_n\in \Sigma \}]$ are referred to  as \emph{phylogenetic invariants}. Elements in $I(V_T)$ that lie in all $I(V_{T'})$ for any other phylogenetic tree $T'$ with leaf set $L$ are known as \emph{model invariants}. Phylogenetic invariants that are not model invariants are called \emph{topology invariants}.

\paragraph{Flattenings}

A  bipartition $A|B$ of $L$ is a decomposition of $L$ into two disjoint sets: $L=A \cup B$.
It naturally induces an isomorphism from the space of tensors to the space of 
${\kappa^{|A|}\times \kappa^{|B|}}$ matrices 
\begin{equation}
\begin{array}{rcl}
  \LL = \left(\otimes_{a \in A}W\right)\otimes \left(\otimes_{b \in B}W\right) & \longrightarrow & {Mat}_{\kappa^{|A|}\times \kappa^{|B|}} (\CC)\\
    p=(p_{x_1\dots x_n}) & \mapsto & flat_{A|B}(p)
\end{array},
\end{equation}
defined as follows: if $x_A=(x_l)_{l\in A}$ and $x_B=(x_l)_{l\in B}$, the $(x_A,x_B)$ entry of $flat_{A|B}(p)$ is $p_{x_A,x_B}$
For example, if we take {$\Sigma=\{\tt A,C,G,T\}$,} $L=\{1,2,3,4\}$ and $A=\{1,2\}$, $B=\{3,4\}$, we have

\[flat_{12|34}(p)=\left(\begin{array}{ccccc}
p_{\tt AAAA} & p_{\tt AAAC} & p_{\tt AA{AG}} &  \dots & p_{\tt AATT} \\
p_{\tt { AC}AA} & p_{\tt {AC}AC} & p_{\tt {AC}{ AG}} & \dots & p_{\tt { AC}TT}\\
p_{\tt AGAA} & p_{\tt AGAC} & p_{\tt AG{AG}} & \dots & p_{\tt AGTT} \\
\vdots & \vdots & \vdots & \vdots & \vdots\\
p_{\tt TTAA} & p_{\tt TTAC} & p_{\tt TT{AG}} & \dots & p_{\tt TTTT}
\end{array}\right).\]
If $T$ is a phylogenetic tree with $L(T)=L$, a bipartition $A|B$ of $L$ is an \emph{edge split} if it can be obtained by removing one of the non-pendant edges of $T$. Thanks to the following result, flattenings provide topology invariants of $V_T$:

\begin{thm}[\cite{Allman2008,CF11,Snyman}]\label{thm_AR}
Let $T$ be a phylogenetic tree and let $A|B$ be a bipartition of its set of leaves $L$. Let $p^T$ be a tensor obtained from a hidden Markov process on $T$. Then, if $A|B$ is an edge split on $T$,  $flat_{A|B}(p^T)$ has rank less than or equal to $\kappa$. Moreover, if $A|B$ is not an edge split and the parameters that generated $p^T$ are non-singular, rank of $flat_{A|B}(p^T)$ is larger than $\kappa$. In particular, for any edge split $A|B$ of $T$, the $(\kappa+1)\times (\kappa+1)$ minors of $flat_{A|B}(p)$ are topology invariants for $T$.
\end{thm}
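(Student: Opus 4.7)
The plan is to treat the two inequalities separately. For the upper bound, when $A|B$ is an edge split induced by a non-pendant edge $e = uv$ of $T$, I would re-root $T$ at $u$ (a choice which does not affect $p^T$, by Remark \ref{change_root}) and then apply the Markov property along each of the two subtrees obtained by cutting $e$. This yields a factorisation
\begin{eqnarray*}
p_{x_A,x_B} \;=\; \sum_{i,j \in \Sigma} F(i,x_A)\, M_e(i,j)\, G(j,x_B),
\end{eqnarray*}
in which $F(i,x_A)$ is the joint probability of observing $x_A$ at the $A$-leaves together with state $i$ at $u$, and $G(j,x_B)$ the conditional probability of observing $x_B$ at the $B$-leaves given state $j$ at $v$. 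Reading this as a matrix product gives $flatt_{A|B}(p^T) = F^{t}\cdot M_e \cdot G$, a product through the $\kappa \times \kappa$ matrix $M_e$, so its rank is at most $\kappa$.

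For the lower bound, since $A|B$ is not displayed by any edge of $T$, I can choose four leaves $a,a' \in A$ and $b,b' \in B$ whose induced quartet subtree of $T$ has topology $((a,b),(a',b'))$ (or the symmetric one); in particular $\{a,a'\}|\{b,b'\}$ is not an edge split of that induced quartet. Marginalising $p^T$ over the remaining leaves amounts to multiplying $flatt_{A|B}(p^T)$ on left and right by block-all-ones matrices, which cannot increase the rank. Thus it suffices to prove the statement for the induced quartet tensor $p^Q$ with bipartition $\{a,a'\}|\{b,b'\}$. For such a quartet, letting $e=uv$ denote the interior edge (with $a,b$ on the $u$-side and $a',b'$ on the $v$-side) and rooting at $u$, one has
\begin{eqnarray*}
p^Q \;=\; \sum_{i,j \in \Sigma} \pi(i)\, M_e(i,j)\, (M_a)_i \otimes (M_{a'})_j \otimes (M_b)_i \otimes (M_{b'})_j,
\end{eqnarray*}
where $(M_\ell)_i$ is the $i$-th row of the transition matrix along the pendant edge to leaf $\ell$. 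Collecting the indices $(x_a,x_{a'})$ for the rows and $(x_b,x_{b'})$ for the columns of the flattening, this rearranges as
\begin{eqnarray*}
flatt_{\{a,a'\}|\{b,b'\}}(p^Q) \;=\; (M_a \otimes M_{a'})^{t} \cdot \Delta \cdot (M_b \otimes M_{b'}),
\end{eqnarray*}
where $\Delta$ is the $\kappa^2 \times \kappa^2$ diagonal matrix whose $((i,j),(i,j))$ entry is $\pi(i)\, M_e(i,j)$.

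Since the parameters are non-singular, each pendant transition matrix is invertible, and therefore so are the two Kronecker products flanking $\Delta$; hence the rank of the quartet flattening equals the number of nonzero entries of $D_\pi M_e$. For generic non-singular edge parameters this is $\kappa^2 > \kappa$, as required. The final statement of the theorem then follows at once: the $(\kappa+1) \times (\kappa+1)$ minors of $flatt_{A|B}$ are polynomials in the entries of $p$ which, by the upper bound, vanish on $\Im \phi_T$, hence on $V_T$, whenever $A|B$ is an edge split of $T$; by the lower bound they do not vanish on $V_{T'}$ for a tree $T'$ on which $A|B$ is not an edge split, so these minors genuinely distinguish topologies.

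The main obstacle I expect is enforcing the strict inequality in the lower bound under the mere hypothesis of non-singularity. The diagonal factorisation makes the obstruction transparent: it is the sparsity pattern of $D_\pi M_e$, which can contain as few as $\kappa$ nonzero entries for very degenerate invertible $M_e$ (for example permutation matrices). Handling such degenerate cases is the delicate step and presumably explains the slightly different hypotheses imposed in the references \cite{Allman2008,CF11,Snyman}; the standard resolutions are either to strengthen the hypothesis to a genericity condition on $M_e$, or to extend from the generic case to the whole variety by semi-continuity of rank.
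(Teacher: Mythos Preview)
The paper does not give its own proof of this theorem: it is stated with citations to \cite{Allman2008,CF11,Snyman} and used as a black box. So there is nothing to compare against, and I can only assess your argument on its own merits.

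Your upper-bound argument is the standard one and is correct. Your lower-bound strategy---reduce to an induced quartet by marginalising (which only lowers rank), then write the ``wrong'' flattening of the quartet as $(M_a\otimes M_{a'})^{t}\,\Delta\,(M_b\otimes M_{b'})$ with $\Delta$ diagonal---is exactly the approach taken in \cite{Allman2008}, and the computation is right.

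You have also put your finger on a genuine issue rather than a cosmetic one. As stated in the paper, the strict inequality is \emph{not} true under the bare hypothesis ``non-singular'': if $M_e$ is a permutation matrix (or, more drastically, if every transition matrix equals the identity), the parameters are non-singular but $\Delta$ has exactly $\kappa$ nonzero entries and the quartet flattening has rank exactly $\kappa$. The cited references in fact carry slightly stronger hypotheses (genericity in \cite{Allman2008}, a block-rank formulation in \cite{CF11}, a parsimony-score formula in \cite{Snyman}) that exclude such degenerate cases; the paper's phrasing is a mild over-simplification, and the only honest fix is one of the two you already propose---assume generic parameters, or note that the strict inequality holds on a dense open set and that this suffices for the ``topology invariant'' conclusion. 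One small additional point: your reduction tacitly assumes the induced quartet is fully resolved (has an interior edge). This is automatic when $T$ is trivalent, but for non-trivalent $T$ the induced quartet can be a star, in which case \emph{every} $2|2$ flattening has rank $\le\kappa$; so the same caveat about the hypotheses applies here too.
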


{Buneman's theorem states that the topology of a tree can be recovered from the collection of its edge splits. 
This allows us to identify trivalent 4-leaf trees (quartets) with the split induced and write $T=A|B$ (see Figure \ref{figure:T1234}).}

\paragraph{$G$-equivariant models}
Several substitution models used in phylogenetics can be described in a very elegant way by the action of a permutation group acting on the set $\{\texttt{A,C,G,T}\}$ (see \cite{Draisma}). We adopt this approach and given the alphabet $\Sigma$, we consider a permutation group $G\leq \mathfrak{S}_{\kappa}$ acting on the standard basis of $W$ (identified with $\Sigma$).
The \emph{permutation representation} of $G$ on $W$ (or $\LL$) is the representation induced by extending linearly this action to all vectors in $W$.
We denote by $W^G$ and $\mathcal{W}^G$ the subspace of vectors in $W$ and $\mathcal{W}$ that remain invariant. Similarly, we denote by $\LL^G$ the subspace of $\LL$ composed of all the tensors invariant by the natural extension of the action of $G$ to $\LL$, that is, $G$ acts simultaneously on each factor $W$ using the original action on $W$: 
{\[\LL^G=\{p\in \otimes^n W \mid g\cdot p =p \, \, \forall g \in G\}.\]}
We write $\MM^G$ for the space of $G$-equivariant matrices in $\MM$: normalized matrices that remain invariant when permuting rows and columns according to the permutations ${g\in G}$ (${M}_{g(i),g(j)}={M}_{i,j}$ for any $g\in G$, and any $i,j\in \Sigma$). Equivalently, $K_g\, M\, K_g^{-1}=M$, where $K_g$ denotes the permutation matrix obtained by applying $g$ to the columns of $\mathrm{Id}$:
\begin{eqnarray*}
	\left(K_g\right)_{i,j}=
	\left\{ \begin{array}{rl}
		1 & \mbox{ if }j=g(i)\\
		0 & \mbox{otherwise.}
	\end{array}
	\right.
\end{eqnarray*}
\begin{rmk}\label{Kg_inverse}It is clear that $K_g$ is an orthogonal matrix and $K_g K_{g^{-1}}=K_{g^{-1}} K_g=\mathrm{Id}$. 
In particular, $K_g^{-1}=K_{g^{-1}}=K_g^t$. On the other hand, the reader may easily check that $\MM^G$ is multiplicatively closed: if $A,B\in \MM^G$, then $A\, B \in \MM^G$. Moreover, if $A\in \MM^G$ is invertible, then $A^{-1}\in \MM^G$.
\end{rmk} 



We define the \emph{$G$-equivariant substitution model} by taking $G$-invariant normalized vectors as root distributions and $G$-equivariant matrices as transition matrices. When $G$ is the trivial group formed by the neutral element, this model coincides with the one presented above and is known as the \emph{general Markov model}. 

\begin{ex}\label{ex:equivarmodels}\rm We present some of the well-known equivariant models on the set of states  $\Sigma=\{\texttt{A,C,G,T}\}$, arising from biological studies. We consider subgroups $G$ of the permutation group $\mathfrak{S}_{4}$ as follows.

\begin{itemize}
    \item[(a)]
    Consider the permutations $g_1=(\a,\c)(\g,\t)$ and $g_2=(\a,\g)(\c,\t)\in \mathfrak{S}_4$ and consider the permutation group $G =\langle g_1, g_2\rangle$. The $G$-invariant vectors of $W$ form the subspace spanned by $\textbf{1}$ and the equivariant matrices for the corresponding  $G$-equivariant model have the following structure
\begin{eqnarray} \label{K3_matrix}
 \begin{pmatrix}
  a & b & c & d \\
  b & a & d & c \\
  c & d & a & b \\
  d & c & b & a
 \end{pmatrix}.
\end{eqnarray}
 This $G$-equivariant model corresponds to the Kimura 3-parameter model (K81 briefly) introduced in \cite{Kimura1981}, and was motivated by the observation of different patterns of substitution among groups of nucleotides. 
 \item[(b)] If we impose $b=d$ in the K81 transition matrices \eqref{K3_matrix}, then we obtain the Kimura 2-parameter model (K80 briefly) that was proposed in \cite{Kimura1980}. This is an equivariant model with group generated by the permutations $g_1$ and $g_2$ in (a), together with $h=(\c,\t)$. Thus, the K80 group is isomorphic to the dihedreal group $D_8$. 
 \item[(c)] When one imposes $b=c=d$ in the K81 model, we obtain  the Jukes-Cantor model introduced in \cite{JC69}, JC69, which is the most simple model of nucleotide substitution. This is a $G$-equivariant model with $G= \mathfrak{S}_{4}$. 
 \end{itemize}
 We will work over these three models in Section 3.  In each case, the group $G$ will be renamed by the corresponding abbreviation so that $K81=\langle g_1,g_2\rangle$, $K80=\langle g_1,g_2,h\rangle$, and $JC69=\mathfrak{S}_{4}.$
\end{ex}

Given a rooted tree, the set of parameters of the corresponding $G$-equivariant model is $Par_G(T)= \mathcal{W}^G \times \prod_{e\in E(T)} \MM^G$.
The corresponding parameterization map is
\begin{eqnarray*}
	\phi^G_T :  Par_G(T)  & \longrightarrow & \LL
\end{eqnarray*}
and we denote by $V^G_T$ the Zariski closure of the image, that is, $V^G_T= \overline{\Im \phi^G_T}\subset \mathcal{L}$. {For any tree $T$, $V_T^G$ lies in $\LL^G$ (see \cite{Draisma})}.  {The defining ideal of $V_T^G \subset \LL$ will be denoted as $I_T^G$; note that as $V_T^G$ is an irreducible variety {(because $\phi_T^G$ is a continuous map from an irreducible variety),} $I_T^G\subseteq R$ is a prime ideal.
It is immediate to see that if $G_1\leq G_2$, then $V_T^{G_1}\supseteq V_T^{G_2}.$}

{Since the subspace $\LL^G$ is independent of any particular tree}, the (linear) equations defining $\LL^G$ within $\LL$ are model invariants. The dimension of the space $\LL^G$ for some particular permutation groups $G$ was given in \cite{CFK}, where it was proven that this space { of $G$-invariant tensors coincides with the space of mixtures of distributions arising from the $G$-equivariant model on phylogenetic trees.}

{Although $V_T^G \subset V_T \cap \LL^G$, the dimension of $V_T^G$ (which can be found in \cite{CFM}) is much larger than the dimension of $V_T$ minus the codimension of $\LL^G$, so a simple dimension count does not give any clue on whether $V_T\cap \LL^G$ coincides with $V_T^G$ or not.}

\begin{rmk}\label{change_root_equivariance}
{Although the parameterization $\phi_T$ depends on the root position $r$, the same joint distribution can be obtained with another root position if the parameters are changed conveniently (see, for instance, \cite{SHP97} or Proposition 1 of \cite{Allman2003}).}
{It is straightforward to check that the modifications of the parameters
preserve the $G$-invariance of the root distribution and the $G$-equivariance of the transition matrices.}
\end{rmk}

\paragraph{Notation} Some notation that will be used throughout the paper is the following. {Vectors are understood as column vectors and, given a vector $w\in \CC^{\kappa}$, we  denote by  $\mathrm{diag}(w)$ the diagonal matrix whose diagonal entries are the coordinates of $w$. Given a  $\kappa\times \kappa$ matrix $M$ and $y\in \Sigma$, $M_y$ denotes the $y$-th column of $M$. %
For notational convenience, we will write $M^{-t}$ for the transpose of the inverse of $M$, i.e. $(M^{-1})^t$. This will be used in Section 4 without further comment.  
We write $\mathbf{1}$ for the vector of ones, $\mathbf{1}=(1,1,\ldots,1)^t$.

\section{Motivating examples}\label{sec_ex}

In this section we proceed to show some examples trying to answer  Question 1 and motivating the results of the forthcoming sections by explaining how to get phylogenetic invariants in a simple way. In all cases we work with $\Sigma=\{\texttt{A,C,G,T}\}$ and identify these elements with the standard basis of $W$. The standard basis of $\LL=\otimes^n W$ is given naturally by tensor products of this basis and is ordered in lexicographical order. We work with the models introduced in Ex. \ref{ex:equivarmodels} and we do not require previous knowledge of representation theory: we only mention some connections to this theory and full details will be given in section 5.

\subsection{Kimura 3-parameter model (K81)}\label{sec:k81}

For the K81 model, the algebraic variety is usually described in Fourier coordinates {with respect to $G$}: if $p$ is a tensor in $\LL$ (understood as a column vector in the coordinates in the standard basis of $\LL$), consider the matrix
\[H=\begin{pmatrix} 1&1&1&1\\
1&1&-1&-1\\
1&-1&1&-1\\
1&-1&-1&1
\end{pmatrix}\]
and perform the change of coordinates $\bar{p}=(H^{-1}\otimes \dots\otimes H^{-1})\, p$, where $H^{-1}=\frac{1}{4}H$. It is well established that, using these coordinates, the ideal of the phylogenetic variety is a binomial ideal (see \cite{Evans1993}, \cite{Sturmfels2005}).
The basis $B$ of $W$ associated to these coordinates is induced by the columns of $H$,
\begin{eqnarray}
\label{FourierBasis}    
\bar{\a} =&  \a+\c+\g+\t =& (1,1,1,1)\\
\bar{\c} =& \a+\c-\g-\t =& (1,1,-1,-1) \nonumber\\
\bar{\g}  = & \a-\c+\g-\t=& (1,-1,1,-1) \nonumber\\
\bar{\t}  = &\a-\c-\g+\t =& (1,-1,-1,1) \nonumber
\end{eqnarray}

Let us see what the constraints of $\LL^{K81}$ impose on $\bar{p} \in \LL$. If we apply the action of $g_1$ and $g_2$ to the vectors in {$B=\{\bar{\a},\bar{\c},\bar{\g},\bar{\t}\}$} we get
\begin{eqnarray}\label{action_g}
\begin{array}{lllllll}
g_1\, \ba =\ba, & \quad & g_1\,\bc=\bc, & \quad  & g_1\,\bg=-\bg, & \quad  & g_1\,\bt =-\bt, \\
g_2\,\ba =\ba, & \quad & g_2\,\bc=-\bc, & \quad & g_2\,\bg=\bg, & \quad & g_2\,\bt =-\bt  \,.
\end{array}
\end{eqnarray}
By identifying $K81$ with the additive group $\ZZ/2\ZZ \times \ZZ/2\ZZ$ via $\a \leftrightarrow (0,0)$, $\c \leftrightarrow (0,1)$, $\g \leftrightarrow (1,0)$, $\t \leftrightarrow (1,1)$ we have the following result:

\begin{lema}\label{modeq:K81}
(Model invariants for $K81$)
A tensor $p \in \LL$ is invariant by the action of the group $K81$ (i.e. belongs to $\LL^{K81}$) if and only if  $\bar{p}_{x_1\dots x_n}=0$ whenever $x_1+\dots+x_n\neq (0,0)\in \ZZ/2\ZZ \times \ZZ/2\ZZ.$
\end{lema}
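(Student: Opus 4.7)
The plan is to exploit the fact that the basis $B=\{\bar{\a},\bar{\c},\bar{\g},\bar{\t}\}$ simultaneously diagonalizes every element of $K81$, so that the action of $K81$ on tensors in these coordinates becomes a diagonal action by characters.

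First, I would note that formulas (\ref{action_g}) show that each generator $g_i$ acts on every basis vector $\bar{x}$ by a sign, so by multiplicativity every $g\in K81$ acts as $g\cdot\bar{x}=\chi_g(x)\,\bar{x}$ with $\chi_g(x)\in\{\pm 1\}$; here I view $x\in\Sigma$ as an element of $\ZZ/2\ZZ\times\ZZ/2\ZZ$ via the stated identification. A direct check of the four cases in (\ref{action_g}) (together with the identity and $g_1g_2$) shows that for each fixed $g$ the map $x\mapsto \chi_g(x)$ is a group homomorphism $\ZZ/2\ZZ\times\ZZ/2\ZZ\to\{\pm 1\}$; equivalently, $\chi_g(x+y)=\chi_g(x)\chi_g(y)$.

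Second, extending multiplicatively to the tensor basis of $\LL$,
\[
g\cdot(\bar{x}_1\otimes\cdots\otimes\bar{x}_n)=\chi_g(x_1)\cdots\chi_g(x_n)\,\bar{x}_1\otimes\cdots\otimes\bar{x}_n=\chi_g(x_1+\cdots+x_n)\,\bar{x}_1\otimes\cdots\otimes\bar{x}_n,
\]
where the second equality uses that $\chi_g$ is a character. Writing $\bar{p}=\sum \bar{p}_{x_1\dots x_n}\,\bar{x}_1\otimes\cdots\otimes\bar{x}_n$ and using linear independence of the tensor basis, $\bar{p}$ is fixed by every $g\in K81$ if and only if, for every multi-index with $\bar{p}_{x_1\dots x_n}\neq 0$, one has $\chi_g(x_1+\cdots+x_n)=1$ for all $g\in K81$.

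Finally, I would invoke the non-degeneracy of the character pairing on the finite abelian group $K81\cong\ZZ/2\ZZ\times\ZZ/2\ZZ$: the only element annihilated by all characters is the identity $(0,0)$. Concretely, from (\ref{action_g}) the character $\chi_{g_1}$ kills exactly $\{(0,0),(0,1)\}$ and $\chi_{g_2}$ kills exactly $\{(0,0),(1,0)\}$, whose intersection is $\{(0,0)\}$. Combined with the previous step, this gives $\bar{p}\in\LL^{K81}$ iff $\bar{p}_{x_1\dots x_n}=0$ whenever $x_1+\cdots+x_n\neq(0,0)$, as claimed. No real obstacle is anticipated; the argument is a direct bookkeeping of signs, and the only conceptual ingredient is the standard fact that characters of a finite abelian group separate points.
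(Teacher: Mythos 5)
Your proof is correct, but it follows a genuinely different route from the paper's. You give a self-contained two-sided argument: since every $g\in K81$ acts diagonally on the Fourier basis with eigenvalue $\chi_g(x_1+\cdots+x_n)$ for a character $\chi_g$ of $\ZZ/2\ZZ\times\ZZ/2\ZZ$, invariance of $\bar p$ is equivalent to each nonzero coordinate being indexed by a multi-index whose sum is annihilated by all $\chi_g$, and the explicit kernels of $\chi_{g_1}$ and $\chi_{g_2}$ intersect only in $(0,0)$. The paper instead only proves the easy inclusion (every basis tensor with $x_1+\cdots+x_n=(0,0)$ is fixed by $g_1$ and $g_2$, by the same sign bookkeeping as yours) and then concludes equality by a dimension count, quoting from \cite{CFK} that $\dim\LL^{K81}=4^{n-1}$, which matches the number of admissible multi-indices. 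Your argument buys independence from that external dimension computation --- indeed it reproves $\dim\LL^{K81}=4^{n-1}$ as a byproduct --- at the cost of making explicit the (easy) facts that the action is diagonal in this basis and that the characters separate points; the paper's version is shorter once the dimension of $\LL^{K81}$ is taken as known. One cosmetic caveat: you rely on the signs listed in \eqref{action_g}, which correspond to the Klein four-group generated by $(\a\c)(\g\t)$ and $(\a\g)(\c\t)$; the displayed generator $g_2=(\a\g)$ in the paper is evidently a typo, and your reading of \eqref{action_g} is the intended one.
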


\begin{proof}
Note that the set of tensors
\[\Omega=\{p \in \LL \mid \bar{p}_{x_1\dots x_n}=0 \textrm{ if } x_1+\dots+x_n\neq (0,0)\}\]
is invariant by $K81$. Indeed, note that $x_1+\dots+x_n \equiv (a,b)$, where $a$ is the number of $\g$ plus the number of $\t$ among the $x_i$ mod 2, and $b$ is the number of $\c$  plus the number of $\t$ mod 2. Thus, $x_1+\dots+x_n =  (0,0)$ if and only if $\sharp \g+\sharp \t \equiv 0$ and $\sharp \c+\sharp \t \equiv 0$ in $\ZZ/2\ZZ$ (i.e. $x_1\dots x_n$ has the same parity of $\c$'s, $\g$'s and $\t$'s). By (\ref{action_g}) this holds if and only if the action of $g_1$ and $g_2$ leaves the coordinate $\bar{p}_{x_1\dots x_n}$ invariant.

Now the lemma follows by dimension count:  $\LL^{K81}$ has dimension $4^{n-1}$ (see \cite{CFK}), which coincides with the dimension of $\Omega$.
\end{proof}

This lemma was known in the previous literature for tensors $p$ in the image of $\phi_T$ for some tree $T$. From this result we obtain a system of linear equations defining $\LL^{K81}$, which is the {space} of mixtures of distributions on trees on $n$ leaves (see   \cite{CFK}). Now let us look at the equations of $V_T$ coming from flattenings and see how they add to these model invariants.

We consider {$n=4$ and $L=\{1,2,3,4\}$. Then
\begin{equation}\label{eq:flat}
flat_{12|34}(\bar{p})=(H^{-1}\otimes H^{-1})\, flat_{12|34}(p) \, (H^{-t}\otimes H^{-t}).
\end{equation}
}

\begin{figure}
\begin{center}
 \includegraphics[scale=0.5]{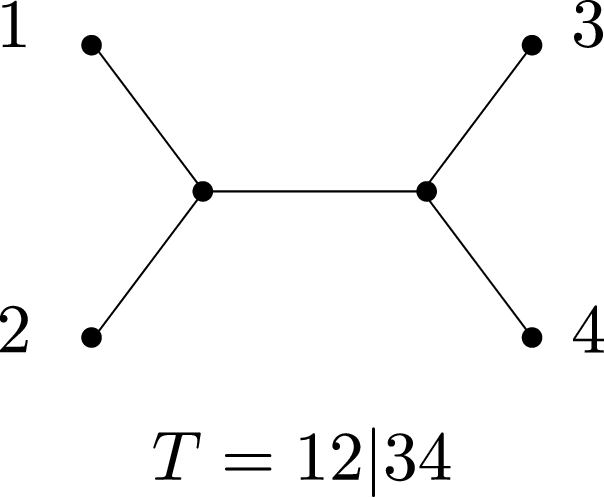}
 \caption{\label{figure:T1234} The trivalent quartet tree $T$ with edge split $12|34$. }
 \end{center}
\end{figure}

According to the previous lemma, $flat_{12|34}(\bar{p})$ can be written as a block diagonal matrix if we choose the following order on rows and columns, $\texttt{AA,CC,GG,TT,AC,CA,GT}$, $\texttt{TG}$, $\texttt{AG,CT,GA,TC,AT,CG,GC,TA}$:

\begin{eqnarray}
\label{flat_K3}
flat_{12|34}(\bar{p})=\begin{pmatrix}
    B_{\texttt A} &  &  &  \\
     & B_{\texttt C} &  &  \\
     &  & B_{\texttt G} & \\
     & &  & B_{\texttt T}
  \end{pmatrix}
  \end{eqnarray}
where
$$B_{\texttt A}=
\begin{pmatrix}
   \bar{p}_{\texttt{AAAA}}& \bar{p}_{\texttt{AACC}} & \bar{p}_{\texttt{AAGG}} & \bar{p}_{\texttt{AATT}} \\
  \bar{p}_{\texttt{CCAA}} & \bar{p}_{\texttt{CCCC}} & \bar{p}_{\texttt{CCGG}}& \bar{p}_{\texttt{CCTT}} \\
  \bar{p}_{\texttt{GGAA}} & \bar{p}_{\texttt{GGCC}} & \bar{p}_{\texttt{GGGG}} & \bar{p}_{\texttt{GGTT}} \\
  \bar{p}_{\texttt{TTAA}} & \bar{p}_{\texttt{TTCC}} & \bar{p}_{\texttt{TTGG}} & \bar{p}_{\texttt{TTTT}}
\end{pmatrix}, \, B_{\texttt C}=  \begin{pmatrix}
   \bar{p}_{\texttt{ACAC}}& \bar{p}_{\texttt{ACCA}} & \bar{p}_{\texttt{ACGT}} & \bar{p}_{\texttt{ACTG}} \\
    \bar{p}_{\texttt{CAAC}}& \bar{p}_{\texttt{CACA}} & \bar{p}_{\texttt{CAGT}} & \bar{p}_{\texttt{CATG}} \\
      \bar{p}_{\texttt{GTAC}}& \bar{p}_{\texttt{GTCA}} & \bar{p}_{\texttt{GTGT}} & \bar{p}_{\texttt{GTTG}} \\
        \bar{p}_{\texttt{TGAC}}& \bar{p}_{\texttt{TGCA}} & \bar{p}_{\texttt{TGGT}} & \bar{p}_{\texttt{TGTG}}
\end{pmatrix}, $$
$$B_{\texttt G} = \begin{pmatrix}
   \bar{p}_{\texttt{AGAG}}& \bar{p}_{\texttt{AGCT}} & \bar{p}_{\texttt{AGGA}} & \bar{p}_{\texttt{AGTC}} \\
       \bar{p}_{\texttt{CTAG}}& \bar{p}_{\texttt{CTCT}} & \bar{p}_{\texttt{CTGA}} & \bar{p}_{\texttt{CTTC}} \\
     \bar{p}_{\texttt{GAAG}}& \bar{p}_{\texttt{GACT}} & \bar{p}_{\texttt{GAGA}} & \bar{p}_{\texttt{GATC}} \\
       \bar{p}_{\texttt{TCAG}}& \bar{p}_{\texttt{TCCT}} & \bar{p}_{\texttt{TCGA}} & \bar{p}_{\texttt{TCTC}}
\end{pmatrix}, \,
B_{\texttt T} = \begin{pmatrix}
   \bar{p}_{\texttt{ATAT}}& \bar{p}_{\texttt{ATCG}} & \bar{p}_{\texttt{ATGC}} & \bar{p}_{\texttt{ATTA}} \\
   \bar{p}_{\texttt{CGAT}}& \bar{p}_{\texttt{CGCG}} & \bar{p}_{\texttt{CGGC}} & \bar{p}_{\texttt{CGTA}} \\
  \bar{p}_{\texttt{GCAT}}& \bar{p}_{\texttt{GCCG}} & \bar{p}_{\texttt{GCGC}} & \bar{p}_{\texttt{GCTA}} \\
  \bar{p}_{\texttt{TAAT}}& \bar{p}_{\texttt{TACG}} & \bar{p}_{\texttt{TAGC}} & \bar{p}_{\texttt{TATA}}
\end{pmatrix},
$$
{(the subindices of these four blocks coincide with the sum of the first two indices of the coordinates).}

By Theorem \ref{thm_AR} {and \eqref{eq:flat},} if $T=12|34$ as in Figure \ref{figure:T1234}, this matrix has rank $\leq 4$ for any $p\in V_T$. Consider tensors $\bar{p}$ in the open set
\begin{equation}\label{eq:O}
    \mathcal{O}=\{\bar{p}_{\texttt{AAAA}}\neq 0, \bar{p}_{\texttt{ACAC}}\neq 0, \bar{p}_{\texttt{AGAG}}\neq 0, \bar{p}_{\texttt{ATAT}}\neq 0\}.
\end{equation}

Then, as there is an element in each block which is different from zero, $flat_{12|34}(\bar{p})$ has rank $\leq 4$ if and only if each block has rank 1. In other words, by observing that this open set meets $V_T^{K81}$ properly (and hence defines a dense subset), we recover the following well known result.
\begin{lema}\label{lem_2minors}
The $2\times 2$ minors of each block $B_{\texttt A}$, $B_{\texttt C}$, $B_{\texttt G}$ and $B_{\texttt T}$ are phylogenetic invariants for the tree $T=12|34$.
\end{lema}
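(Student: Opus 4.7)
The plan is to combine the block-diagonal structure of $flatt_{12|34}(\bar p)$ with the rank bound from Theorem~\ref{thm_AR}. Since $V_T^{K81}\subseteq V_T$ and the Fourier change of basis $\bar p=(H^{-1}\otimes\cdots\otimes H^{-1})p$ is a linear isomorphism (so rank conditions on flattenings transfer between the two coordinate systems), any $\bar p$ arising from a tensor in $V_T^{K81}$ satisfies $\rank flatt_{12|34}(\bar p)\leq 4$. Because $V_T^{K81}\subseteq \LL^{K81}$, Lemma~\ref{modeq:K81} forces the flattening to be block-diagonal in the chosen ordering of states, so
\[
\rank flatt_{12|34}(\bar p)=\rank B_{\texttt A}+\rank B_{\texttt C}+\rank B_{\texttt G}+\rank B_{\texttt T}\leq 4.
\]

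Next I would show that each of the four blocks has rank exactly $1$, which is equivalent to the vanishing of its $2\times 2$ minors. On the open set $\mathcal{O}$ the four entries $\bar p_{\texttt{AAAA}},\bar p_{\texttt{ACAC}},\bar p_{\texttt{AGAG}},\bar p_{\texttt{ATAT}}$ are non-zero, and these sit as the top-left entries of $B_{\texttt A},B_{\texttt C},B_{\texttt G},B_{\texttt T}$ respectively (one in each block). Hence each block has rank $\geq 1$ on $\mathcal{O}$, and combined with the inequality above each block has rank exactly $1$. Therefore the $2\times 2$ minors of every block vanish on $V_T^{K81}\cap\mathcal{O}$.

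To upgrade this from $\mathcal{O}$ to all of $V_T^{K81}$, I would invoke a Zariski-density argument: $V_T^{K81}$ is irreducible (stated in the preliminaries, cf.~\cite{CFM}), so if $V_T^{K81}\cap\mathcal{O}$ is non-empty it is dense. Polynomials (the $2\times 2$ minors) that vanish on a dense subset of an irreducible variety vanish identically on that variety, yielding the desired phylogenetic invariants.

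The one non-tautological point is to certify that $V_T^{K81}\cap\mathcal{O}\neq\emptyset$; I do not expect this to be hard. The simplest route is to plug into $\phi_T^{K81}$ a specific tuple of parameters, for instance the uniform root distribution $\pi=(1/4,1/4,1/4,1/4)$ and transition matrices of the form~(\ref{K3_matrix}) close to the identity. The corresponding image is close to the uniform distribution on constant patterns, whose Fourier transform has dominant entries at precisely $\bar p_{\texttt{AAAA}},\bar p_{\texttt{ACAC}},\bar p_{\texttt{AGAG}},\bar p_{\texttt{ATAT}}$. This openness check is the only step that is not purely formal, and it is the main (mild) obstacle in the argument.
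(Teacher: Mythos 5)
Your argument is correct and follows essentially the same route as the paper: rank $\leq 4$ from Theorem~\ref{thm_AR} transferred to Fourier coordinates, the block-diagonal structure from Lemma~\ref{modeq:K81} forcing each block to have rank one on the open set $\mathcal{O}$, and Zariski density over the irreducible variety $V_T^{K81}$. The only difference is that you spell out the non-emptiness of $V_T^{K81}\cap\mathcal{O}$ (which the paper merely asserts), and your proposed witness — uniform root with near-identity equivariant matrices — does verify it.
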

Moreover, these are \emph{topology} invariants due to Theorem \ref{thm_AR}. These equations were first obtained in \cite{Sturmfels2005} by using Fourier coordinates and can be obtained independently by the  tools of representation theory for $G$-equivariant models developed in \cite{Draisma} (see \cite{CF11}). Note that we have obtained this result in a direct way from Theorem \ref{thm_AR} by adjoining  the constraints of $\LL^{K81}$.  In section 5 we prove that both approaches are equivalent for any $G$-equivariant model, so the simple  way of getting these equations as explained above can be reproduced for all models and trees. Moreover in  Example \ref{sec5_ExK81} we give an interpretation of the open set $\mathcal{O}$ in terms of marginalizations of the tensor.


\begin{rmk}
The Fourier basis introduced above is consistent with the Maschke decomposition of $W$ into the isotypic components induced by the permutation representation of the group $K81$ on $W$ (see \cite[Example 5.3]{CF11} {and Example \ref{ex_K80} below).} Similarly, the basis
\[B^n=\{\bar{x}_1\otimes\dots\otimes \bar{x}_n \mid\, \bar{x}_i\in B\}\]
of $\otimes^n W$ is adapted to the isotypic components of $\LL$.
\end{rmk}

\subsection{Kimura 2-parameter model (K80)}\label{sec:k80}

In this subsection we derive equations for the K80 model just by imposing the constraints of $\LL^{K81}$ to the equations obtained for K81 above.
%

%

\subsubsection{The K80 model on tripods}
We first study tripod trees and obtain the following result that gives a positive answer to Question 1.
\begin{lema}\label{lem_tripodk80}
    If $T$ is the tripod tree, the intersection $V_T^{K81}\cap \LL^{K80}$ is an irreducible variety which coincides with $V_T^{K80}$. Moreover, we have the following equality in terms of ideals of the ring $R=\mathbb{C}[\{p_{x_1\dots x_n}\mid x_1,\dots,x_n\in \Sigma\}]$: $I(V_T^{K81})+I(\LL^{K80})=I(V_T^{K80})$.
\end{lema}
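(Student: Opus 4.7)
The forward containment $V_T^{K80}\subseteq V_T^{K81}\cap\LL^{K80}$ is immediate from $\MM^{K80}\subseteq\MM^{K81}$, $sW^{K80}\subseteq sW^{K81}$ and the $G$-equivariance of $\phi_T^{K80}$; this also yields $I(V_T^{K81})+I(\LL^{K80})\subseteq I(V_T^{K80})$. The heart of the lemma is the reverse set-theoretic inclusion, which I would prove by an explicit computation in the Fourier basis $B$ from Section \ref{sec:k81}.

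In that basis, every $M\in\MM^{K81}$ is diagonal with Fourier eigenvalues $\hat{M}_\a=1$ (row-stochasticity) and $\hat{M}_\c,\hat{M}_\g,\hat{M}_\t\in\CC$ free; the extra condition $M\in\MM^{K80}$ becomes $\hat{M}_\c=\hat{M}_\t$, and the $G$-invariant normalized root distribution is uniform for both models. From \eqref{eq:tripod} the tripod parameterization acquires the monomial form
\[
\bar{p}_{x,y,z}=\delta_{x+y+z,0}\;\hat{A}_x\hat{B}_y\hat{C}_z,
\]
while by \eqref{action_g}, $\LL^{K80}$ is cut out of $\LL^{K81}$ by the linear relations $\bar{p}_{x,y,z}=\bar{p}_{h(x),h(y),h(z)}$ with $h$ swapping $\bc,\bt$ and fixing $\ba,\bg$.

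Now take $\bar{p}\in V_T^{K81}\cap\LL^{K80}$. A Zariski density argument lets me assume $\bar{p}$ comes from non-singular K81 parameters $(\hat{A},\hat{B},\hat{C})$. Applying the $h$-invariance to the pairs $(\c,\c,\a)\leftrightarrow(\t,\t,\a)$ and $(\c,\t,\g)\leftrightarrow(\t,\c,\g)$ (and analogues for the other two matrices) yields $\hat{A}_\c\hat{B}_\c=\hat{A}_\t\hat{B}_\t$ and $\hat{A}_\c\hat{B}_\t=\hat{A}_\t\hat{B}_\c$; together with the non-vanishing hypothesis these force $\hat{M}_\t=\varepsilon\hat{M}_\c$ for a common sign $\varepsilon\in\{\pm 1\}$ independent of $M\in\{A,B,C\}$. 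If $\varepsilon=+1$ the parameters already belong to $\MM^{K80}$. If $\varepsilon=-1$, define new K80-equivariant parameters by $\tilde{M}_\a=1,\;\tilde{M}_\c=\tilde{M}_\t=\hat{M}_\c,\;\tilde{M}_\g=-\hat{M}_\g$; a direct check on the sixteen triples $(x,y,z)$ with $x+y+z=0$ shows that $(\tilde{A},\tilde{B},\tilde{C})$ reproduces $\bar{p}$, the flip of the $\g$-eigenvalues exactly compensating the swap in $\c,\t$. This establishes $V_T^{K81}\cap\LL^{K80}\subseteq V_T^{K80}$, hence equality, and irreducibility of the intersection follows from that of $V_T^{K80}$.

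The principal obstacle is the sign ambiguity in the $\varepsilon=-1$ case; what saves the argument is that flipping the $\g$-eigenvalues keeps each $\tilde{M}$ inside $\MM^{K80}$, since the K80 constraint only touches $\hat{M}_\c$ and $\hat{M}_\t$. For the ideal equality, $I(\LL^{K80})$ is generated by linear forms, so $R/I(\LL^{K80})$ is a polynomial ring in the $h$-invariant Fourier coordinates; the set-theoretic equality above implies that the projection of $I(V_T^{K81})$ has the same zero locus as $I(V_T^{K80})$, and a direct comparison of the binomial generators arising from the two monomial parameterizations (equivalently, a reducedness check on the scheme-theoretic intersection) upgrades this to the asserted ideal equality.
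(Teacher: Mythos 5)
Your Fourier-coordinate computation at generic points is correct and instructive: the relations $\hat{M}_{\c}\hat{N}_{\c}=\hat{M}_{\t}\hat{N}_{\t}$ and $\hat{M}_{\c}\hat{N}_{\t}=\hat{M}_{\t}\hat{N}_{\c}$ do force $\hat{M}_{\t}=\varepsilon\hat{M}_{\c}$ with a sign $\varepsilon$ common to $A,B,C$, and in the $\varepsilon=-1$ case flipping the $\g$-eigenvalues lands the parameters back in $\MM^{K80}$ while reproducing $\bar{p}$. But this only proves that the points of $V_T^{K81}\cap \LL^{K80}$ arising from \emph{non-singular} parameters lie in $V_T^{K80}$, i.e.\ that $V_T^{K80}$ contains every irreducible component of the intersection that meets the non-singular locus. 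It does not exclude additional components contained in the image of singular parameters, where some $\hat{M}_x=0$ and your divisions are illegal. This is not a removable technicality: the identical generic-point reduction is available for the quartet $12|34$ (this is exactly Corollary \ref{cor_subgrup}), and yet Section \ref{sec_quartetsK80} shows that $I_T^{K81}+I(\LL^{K80})$ has $93$ minimal primes there. Irreducibility of the intersection is precisely what a Zariski-density argument cannot deliver; the paper settles it for the tripod by an explicit Macaulay2 computation of the minimal primes of $I(V_T^{K81})+I(\LL^{K80})$ (Appendix \ref{sec:app1}), finding a single one equal to $I(V_T^{K80})$.

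The second gap is the ideal equality. Set-theoretic equality together with primality of $I(V_T^{K80})$ only yields $\sqrt{I(V_T^{K81})+I(\LL^{K80})}=I(V_T^{K80})$; one must still show the sum is radical, and the quartet example shows this can genuinely fail for sums of this shape (there the linear form $\bp_{\c\c\c\c}-\bp_{\c\c\t\t}$ lies in $I_T^{K80}$ but not in $I_T^{K81}+I(\LL^{K80})$). Your proposed ``direct comparison of binomial generators / reducedness check'' is the missing computation itself, not a consequence of what precedes it. In summary: your route is genuinely different from the paper's (which is purely computational) and it correctly explains \emph{why} the generic picture works, including the sign ambiguity, but the two actual assertions of the lemma --- irreducibility of $V_T^{K81}\cap\LL^{K80}$ and the equality of ideals --- remain unproved without the explicit primary/minimal-prime computation.
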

\begin{proof}
   In Appendix \ref{sec:app1} we prove the equality of ideals by using Macaulay2 \cite{M2} and the computation done in Small Phylogenetic trees webpage \url{https://www.coloradocollege.edu/aapps/ldg/small-trees/} (see \cite{Smalltrees}).
   From this, the intersection $V_T^{K81} \cap \mathcal{L}^{K80}$ is equal to $V_T^{K80}$ and, in particular, it is irreducible.
   \end{proof}
Actually, the answer to Question 1 would require working with the variety $V_T$ of the general Markov model on $T$ instead of $V_T^{K81}$. However, we will see in Corollary \ref{cor_subgrup} that one can work on the intersection from submodels. It is worth noting that even for the tripod, a generating set for the ideal $I(V_T)$ remains unknown (the problem of giving a set of generators is known as the Salmon conjecture \cite[Conjecture 3.24]{ASCB2005}), so we could not have performed the computations from the general Markov model directly.

\subsubsection{K80 model on quartets}\label{sec_quartetsK80}

\begin{rmk} (Equations defining $\LL^{K80}$ when $n=4$)
\label{modeq:K80}
%
The space $\LL^{K80}$ is defined by the equations in Lemma \ref{modeq:K81} {together with new constraints $h\cdot p=p$
(because $h$ together with the trivial permutation $id$ form a transversal of $K81\backslash K80$).
Translating this into Fourier coordinates, we get that} the
model invariants for $K81$ displayed in Lemma \ref{modeq:K81} together with the following 28 linear equations 
form a minimal set of equations defining $\LL^{K80}$ within $\LL$:
\[
\begin{array}{cccc}
\bar{p}_{\a\a\c\c} = \bar{p}_{\a\a\t\t} & \bar{p}_{\c\c\a\a} = \bar{p}_{\t\t\a\a} & \bar{p}_{\c\c\c\c} = \bar{p}_{\t\t\t\t} & \bar{p}_{\c\c\g\g} = \bar{p}_{\t\t\g\g} \\ 
\bar{p}_{\c\c\t\t} = \bar{p}_{\t\t\c\c} & \bar{p}_{\g\g\c\c} = \bar{p}_{\g\g\t\t} & \bar{p}_{\g\a\c\t} = \bar{p}_{\g\a\t\c} & \bar{p}_{\a\g\c\t} = \bar{p}_{\a\g\t\c} \\
\bar{p}_{\c\t\a\g} = \bar{p}_{\t\c\a\g} & \bar{p}_{\c\t\c\t} = \bar{p}_{\t\c\t\c} & \bar{p}_{\c\t\g\a} = \bar{p}_{\t\c\g\a} & \bar{p}_{\c\t\t\c} = \bar{p}_{\t\c\c\t}\\
\bar{p}_{\a\c\c\a} = \bar{p}_{\a\t\t\a} & \bar{p}_{\a\c\g\t} = \bar{p}_{\a\t\g\c} & \bar{p}_{\a\c\t\g} = \bar{p}_{\a\t\c\g} & \bar{p}_{\c\a\a\c} = \bar{p}_{\t\a\a\t}\\
\bar{p}_{\c\a\c\a} = \bar{p}_{\t\a\t\a} & \bar{p}_{\c\a\g\t} = \bar{p}_{\t\a\g\c} & \bar{p}_{\c\a\t\g} = \bar{p}_{\t\a\c\g} & \bar{p}_{\g\c\a\t} = \bar{p}_{\g\t\a\c}\\
\bar{p}_{\g\c\t\a} = \bar{p}_{\g\t\c\a} & \bar{p}_{\g\c\g\c} = \bar{p}_{\g\t\g\t} & \bar{p}_{\g\c\c\g} = \bar{p}_{\g\t\t\g} & \bar{p}_{\c\g\a\t} = \bar{p}_{\t\g\a\c}\\
\bar{p}_{\c\g\t\a} = \bar{p}_{\t\g\c\a} & \bar{p}_{\c\g\g\c} = \bar{p}_{\t\g\g\t} & \bar{p}_{\c\g\c\g} = \bar{p}_{\t\g\t\g} & 
\bar{p}_{\a\c\a\c} = \bar{p}_{\a\t\a\t}.
\end{array}\]
Note that the codimension of $\LL^{K80}$ within $\LL^{K81}$ is indeed 28 (see \cite[Prop. 20]{CFK}).
\end{rmk}
%

Now we consider the quartet tree $T=12|34$ as done in section \ref{sec:k81}. If $p\in V_T^{K80}$, the flattening $flat_{12|34}(\bp)$ in (\ref{flat_K3}) has four blocks again, which must have rank $\leq 1$ for tensors in $V_T^{K80}$ (as $V_T^{K80}\subset V_T^{K81})$.
From the equations in Remark \ref{modeq:K80} we get that {blocks $B_{\texttt C}$ and $B_{\texttt T}$ have the same entries up to rearrangements of rows and columns,} so we only need to consider the $2\times 2$ minors of $B_{\texttt A}, B_{\texttt C}$, and $B_{\texttt G}$.
Moreover, Remark \ref{modeq:K80} gives identities between the entries of  block $B_{\texttt A}$. For example, the minor formed by the first two rows and the first and fourth columns becomes
$\bp_{\a\a\a\a}\,\bp_{\c\c\t\t}-\bp_{\a\a\c\c}\,\bp_{\c\c\a\a}=0$. Subtracting this from the first minor $\bp_{\a\a\a\a}\,\bp_{\c\c\c\c}-\bp_{\a\a\c\c}\,\bp_{\c\c\a\a}$ of $B_{\texttt A}$, we get $\bp_{\a\a\a\a}\,(\bp_{\c\c\c\c}-\bp_{\c\c\t\t})=0$. Thus, as the ideal $I(V_T^{K80})$ is prime and $\bp_{\a\a\a\a}$ does not vanish at all points of $V_T^{K80}$, we obtain that
%
\begin{eqnarray*}
\bp_{\c\c\c\c}-\bp_{\c\c\t\t}
\end{eqnarray*}
is a linear phylogenetic invariant for the model $K80$.
Similarly, working with $B_{\g}$, we obtain the phylogenetic linear invariant
\begin{eqnarray*}
\bp_{\c\t\c\t}-\bp_{\c\t\t\c}.
\end{eqnarray*}
These two linear invariants define the same linear variety as the invariants discovered by Lake in \cite{Lake1987}.
Besides these two, the rank constraints for the blocks of $flat_{12|34}(\bar{p})$ produce a total of  54 quadratic phylogenetic invariants (a set of non-redundant $2\times 2$ minors).

This particular example shows that in general Question 1 does not have a positive answer in terms of ideals: we have $I_T^{K80}\neq I_T^{K81}+I(\LL^{K80})$ for $T=12|34$. Indeed, $\bp_{\texttt{CCCC}}-\bp_{\texttt{CCTT}}$ lies in $I_T^{K80}$  but not in $J=I_T^{K81}+I(\LL^{K80})$ because the linear part of $J$ coincides with $I(\LL^{K80}).$ With Macaulay2 computations (and the help of the package Binomials \cite{kahle10}, see Appendix \ref{sec:app1}) we found that  $I_T^{K81}+I(\LL^{K80})$ has 93 minimal primes, which gives a negative answer to Question 1 in terms of varieties as well. One of the primes corresponds to $V_T^{K80}$ and there are 60 other primes of  degree 2 and 32 linear primes.


\subsection{Jukes-Cantor model (JC69)}\label{sec:jc69}

Now we proceed to obtain equations for JC69 from those obtained for K80 when imposing the constraints of $\LL^{JC69}$.

\subsubsection{JC69 model on tripods}
In this case we get an analogous result to the K80 case (see Appendix \ref{sec:app2} for a computational proof):
\begin{lema}\label{lem_tripodjc69}
    Let $T$ be the tripod tree. Then the intersection $V_T^{K80}\cap \LL^{JC69}$ is an irreducible variety which coincides with $V_T^{JC69}$. Moreover, we have the following equality in terms of ideals $I(V_T^{K80})+I(\LL^{JC69})=I(V_T^{JC69}).$
\end{lema}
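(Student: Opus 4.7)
The plan is to mirror the approach of Lemma \ref{lem_tripodk80}. Since $V_T^{JC69}$ is irreducible (by the general results on equivariant models cited in the Preliminaries), its defining ideal $I(V_T^{JC69})$ is prime. Consequently, once the ideal-level equality $I(V_T^{K80})+I(\LL^{JC69})=I(V_T^{JC69})$ is established, taking radicals immediately yields the equality of varieties $V_T^{K80}\cap \LL^{JC69}=V_T^{JC69}$ and hence irreducibility of the intersection. So the whole statement reduces to proving the single ideal identity.

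The inclusion $I(V_T^{K80})+I(\LL^{JC69})\subseteq I(V_T^{JC69})$ is for free: every generator of $I(V_T^{K80})$ vanishes on $V_T^{JC69}\subseteq V_T^{K80}$, and every generator of $I(\LL^{JC69})$ vanishes on $V_T^{JC69}\subseteq \LL^{JC69}$. The reverse inclusion is the content of the lemma and is what we verify computationally in Macaulay2.

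For this I would assemble three ingredients: (i) an explicit set of generators of $I(V_T^{K80})$, which is either read off from the Small Phylogenetic Trees database or, by Lemma \ref{lem_tripodk80}, obtained as the sum $I(V_T^{K81})+I(\LL^{K80})$ with the K81 generators taken from \cite{Smalltrees} and the linear generators of $\LL^{K80}$ inside $\LL^{K81}$ listed in Remark \ref{modeq:K80} (adapted from four leaves to three); (ii) the additional linear equations defining $\LL^{JC69}$ inside $\LL^{K80}$, which are the relations $\bp_{x_1x_2x_3}=\bp_{m_j(x_1)m_j(x_2)m_j(x_3)}$ from (\ref{eq_modjc69}) applied to the transversal representatives $m_1=(\a\c)$ and $m_2=(\a\t)$; (iii) a set of generators for $I(V_T^{JC69})$ from \cite{Smalltrees}. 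The equality of the two ideals is then a finite Gröbner basis verification.

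The main obstacle is purely computational rather than conceptual: the ambient ring has $4^3=64$ variables, and a naive Gröbner basis computation of $I(V_T^{K80})+I(\LL^{JC69})$ can be heavy. The linear equations of $\LL^{JC69}$ should therefore be used first to eliminate variables, reducing the computation to the much smaller quotient ring on the $K80/JC69$--invariant coordinates before testing containment against the generators of $I(V_T^{JC69})$. As in Lemma \ref{lem_tripodk80}, the resulting Macaulay2 script and its output are deferred to Appendix \ref{sec:app2}.
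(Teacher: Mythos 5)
Your proposal matches the paper's proof: the paper likewise reduces everything to a Macaulay2 computation in the reduced (Fourier) coordinates of the Small Phylogenetic Trees database, adding the linear relations coming from a transversal of $K80$ in $JC69$ to the $K80$ tripod ideal, and then deduces the variety statement and irreducibility from primality of $I(V_T^{JC69})$ exactly as you do. The only cosmetic difference is that the paper's Appendix B script verifies the claim by computing the minimal primes of the sum and finding a single one equal to the $JC69$ ideal, rather than by the direct two-sided ideal-containment (Gr\"obner basis) test you describe.
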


\subsubsection{JC69 model on quartets}

\begin{rmk}
\label{modeq:JC}
(Equations defining $\LL^{JC69}$ when $n=4$) A minimal set of equations defining $\LL^{JC69}$ can be obtained from the  equations displayed in Remark \ref{modeq:K80} together with the following 21 equations:
\[\begin{array}{cccccc}
\bar{p}_{\a\a\g\g} = \bar{p}_{\a\a\t\t} & \bar{p}_{\c\c\g\g} = \bar{p}_{\c\c\t\t} & \bar{p}_{\g\g\a\a} = \bar{p}_{\t\t\a\a} & \bar{p}_{\g\g\c\c} = \bar{p}_{\t\t\c\c} &
\bar{p}_{\g\g\g\g} = \bar{p}_{\t\t\t\t} & \bar{p}_{\a\c\a\c} = \bar{p}_{\a\g\a\g} \\
\bar{p}_{\a\c\c\a} = \bar{p}_{\a\g\g\a} & \bar{p}_{\a\c\g\t} = \bar{p}_{\a\c\t\g} & \bar{p}_{\a\c\g\t} = \bar{p}_{\a\g\c\t} & \bar{p}_{\g\t\c\a} = \bar{p}_{\t\g\c\a} &
\bar{p}_{\g\t\g\t} = \bar{p}_{\t\g\t\g} & \bar{p}_{\c\a\a\c} = \bar{p}_{\g\a\a\g} \\
\bar{p}_{\c\a\c\a} = \bar{p}_{\g\a\g\a} & \bar{p}_{\c\a\g\t} = \bar{p}_{\c\a\t\g} & \bar{p}_{\c\a\g\t} = \bar{p}_{\g\a\c\t} & \bar{p}_{\t\c\g\a} = \bar{p}_{\t\g\c\a} &
\bar{p}_{\t\c\t\c} = \bar{p}_{\t\g\t\g} & \bar{p}_{\g\t\a\c} = \bar{p}_{\t\g\a\c}.
\end{array}\]
{This can be seen by using the transversal $\{id, m_1 = (\a\c), m_2 = (\a\t)\}$ of $K80\backslash JC69$ and translating the new constraints $m_1\cdot p=p$, $m_2\cdot p=p$ into Fourier coordinates.} 

Note that the codimension of $\LL^{JC69}$ within $\LL^{K80}$ is 21, in concordance with \cite[Prop. 20]{CFK}.
%
\end{rmk}

We consider again the tree $T=12|34$. To obtain phylogenetic invariants for the JC69 model on $T$, we add the constraints in Remark \ref{modeq:JC} to those already obtained for K80.  
%
Using them, the blocks $B_{\X}$ become
\[B_\a = \left ( \begin{array}{cccc}
\bar{p}_{\a\a\a\a} & \bar{p}_{\a\a\c\c} & \bar{p}_{\a\a\c\c} & \bar{p}_{\a\a\c\c} \\
\bar{p}_{\c\c\a\a} & \bar{p}_{\c\c\c\c} & \bar{p}_{\c\c\g\g} & \bar{p}_{\c\c\g\g} \\
\bar{p}_{\c\c\a\a} & \bar{p}_{\c\c\g\g} & \bar{p}_{\c\c\c\c} & \bar{p}_{\c\c\g\g} \\
\bar{p}_{\c\c\a\a} & \bar{p}_{\c\c\g\g} & \bar{p}_{\c\c\g\g} & \bar{p}_{\c\c\c\c}
\end{array} \right )
 \qquad
 B_{\g} = \left (
\begin{array}{cccc}
\bar{p}_{\a\c\a\c} & \bar{p}_{\a\c\g\t} & \bar{p}_{\a\c\c\a}  & \bar{p}_{\a\c\g\t}  \\
\bar{p}_{\c\g\a\t} & \bar{p}_{\c\g\c\g} & \bar{p}_{\c\g\t\a} & \bar{p}_{\c\g\g\c} \\
\bar{p}_{\c\a\a\c} & \bar{p}_{\c\a\g\t} & \bar{p}_{\c\a\c\a} & \bar{p}_{\c\a\g\t} \\
\bar{p}_{\c\g\a\t} & \bar{p}_{\c\g\g\c} & \bar{p}_{\c\g\t\a} & \bar{p}_{\c\g\c\g}
\end{array} \right ).
\]
%
\[ B_{\c} = \left (
\begin{array}{cccc}
\bar{p}_{\a\c\a\c} & \bar{p}_{\a\c\c\a} & \bar{p}_{\a\c\g\t} & \bar{p}_{\a\c\g\t}  \\
\bar{p}_{\c\a\a\c} & \bar{p}_{\c\a\c\a} & \bar{p}_{\c\a\g\t} & \bar{p}_{\c\a\g\t} \\
\bar{p}_{\c\g\a\t} & \bar{p}_{\c\g\t\a} & \bar{p}_{\c\g\c\g} & \bar{p}_{\c\g\g\c} \\
\bar{p}_{\c\g\a\t} & \bar{p}_{\c\g\t\a} & \bar{p}_{\c\g\g\c} & \bar{p}_{\c\g\c\g}
\end{array} \right ) \]
(block $B_{\t}$ is omitted as it contains the same entries as $B_{\c}$, up to rearrangements of rows and columns). 
As already noted in the K80 model, some rank equations obtained from these blocks now become redundant. This phenomenon can be also understood by making use of the representation theory of the groups involved (see forthcoming Example \ref{ex_K80}). It is to avoid this redundancy that in the forthcoming section 5 we invoke the concept of thin flattening introduced in \cite{CF11} rather the usual flattening of \cite{Draisma}.

%
Similar computations to those performed for the $K80$ model give rise to phylogenetic invariants: 2 linear invariants 
\begin{eqnarray*}
\bar{p}_{\c\c\g\g} = \bar{p}_{\c\c\c\c} \qquad
\bar{p}_{\c\g\c\g} = \bar{p}_{\c\g\g\c},
\end{eqnarray*}
plus 10 quadrics
\begin{eqnarray*}
\bar{p}_{\a\a\a\a}\,\bar{p}_{\c\c\c\c} - \bar{p}_{\a\a\c\c}\,\bar{p}_{\c\c\a\a} = 0 &
\bar{p}_{\a\c\a\c}\,\bar{p}_{\c\a\c\a} - \bar{p}_{\a\c\c\a}\,\bar{p}_{\c\a\a\c} = 0 \\
\bar{p}_{\a\c\a\c}\,\bar{p}_{\c\g\t\a} - \bar{p}_{\a\c\c\a}\,\bar{p}_{\c\g\a\t}= 0 &
\bar{p}_{\c\a\a\c}\,\bar{p}_{\c\g\t\a} - \bar{p}_{\c\a\c\a}\,\bar{p}_{\c\g\a\t} = 0 \\
\bar{p}_{\a\c\a\c}\,\bar{p}_{\c\a\g\t} - \bar{p}_{\a\c\g\t}\,\bar{p}_{\c\a\a\c} = 0 &
\bar{p}_{\a\c\a\c}\,\bar{p}_{\c\g\c\g} - \bar{p}_{\a\c\g\t}\,\bar{p}_{\c\g\a\t} = 0 \\
\bar{p}_{\c\a\a\c}\,\bar{p}_{\c\g\c\g} - \bar{p}_{\c\a\g\t}\,\bar{p}_{\c\g\a\t} = 0 &
\bar{p}_{\a\c\c\a}\,\bar{p}_{\c\a\g\t} - \bar{p}_{\a\c\g\t}\,\bar{p}_{\c\a\c\a} = 0 \\
\bar{p}_{\a\c\c\a}\,\bar{p}_{\c\g\c\g} - \bar{p}_{\a\c\g\t}\,\bar{p}_{\c\g\t\a} = 0 &
\bar{p}_{\c\a\c\a}\,\bar{p}_{\c\g\c\g} - \bar{p}_{\c\a\g\t}\,\bar{p}_{\c\g\t\a} = 0.
\end{eqnarray*}

\section{The main result}
The aim of this section is to prove the main result of the paper (Theorem \ref{thm_main}). We start by explaining the marginalization procedure that will be needed to apply induction.

We consider again a finite set $\Sigma$ with cardinality $\kappa$ and $W=\langle \Sigma \rangle_{\mathbb{C}}$. Given a vector $w\in W$, write $(w_1,\dots,w_{\kappa})$ for its components in the basis induced by $\Sigma$. In the space $\otimes^n W$ we can define the \emph{marginalization} over the last component as the map: 
\[\begin{array}{rcl}
f_n:\otimes^n W &\rightarrow & \otimes^{n-1}W\\
w^1\otimes\dots \otimes w^n &\mapsto & \left ( \sum_{i \in \Sigma}w^n_i\right ) \; (w^1\otimes\dots\otimes w^{n-1}) 
\end{array}\]
(and extended by linearity). 
For any $l\in L$, we introduce the notation $W_l$ to denote the copy of $W$ in $\otimes^n W$ corresponding to $l$. 
In the space $\bigotimes_{u\in L}W_u$ the \emph{marginalization} over the component $l\in L$ is defined accordingly as the map
\[\begin{array}{rcl}
f_l:\otimes^n W=\bigotimes_{u\in L}W_u &\rightarrow &\bigotimes_{u \neq l}W_u\\
\otimes_{u \in L}\; w^u &\mapsto & \left (\mathbf{1}^t\cdot w^l \right ) \;  \left(\otimes_{u \neq l}\; w^u\right)
\end{array}\]
{(noting that $\textbf{1}^t\cdot w^l=\sum_{i \in \Sigma}w^l_i$).} 
If $T$ is a tree, let $l$ be one of its leaves and $T'$ be the tree obtained from $T$ by \emph{pruning} $l$ {(that is, $T'$ is obtained by removing $l$ and the pendant edge adjacent to it, and suppressing the interior node adjacent to this edge).} 
Then the marginalization map satisfies  $f_l(\Im \varphi_T^G)=\Im \varphi_{T'}^G$ (see \cite[Lemma 4.11]{CFM}).



\begin{lema}\label{marginalisation}
For any $p\in \LL^G$ and $l\in L$, $f_l(p)$ is also $G$-invariant. 
\end{lema}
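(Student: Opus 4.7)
The plan is to view $f_l$ as a $G$-equivariant linear map between two $G$-representations, from which the statement follows immediately.

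First I would make explicit the $G$-action on $\otimes^n W$: since the representation on $W$ is the permutation representation induced by $G\leq \mathfrak{S}_\kappa$ acting on the basis $\Sigma$, the action on $\otimes^n W$ is the diagonal one,
\[
g\cdot (v^1\otimes \cdots \otimes v^n)=(g\cdot v^1)\otimes \cdots \otimes (g\cdot v^n),
\]
and $\mathcal{L}^G$ is by definition the fixed subspace. Similarly $G$ acts diagonally on $\bigotimes_{u\neq l}W_u$.

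Next I would observe that the map $f_l$ can be factored as $f_l=\mathrm{id}\otimes \cdots \otimes \mathbf{1}^{*}\otimes \cdots \otimes \mathrm{id}$, where $\mathbf{1}^{*}:W\to \mathbb{C}$ in the $l$-th slot is the linear functional $v\mapsto \mathbf{1}\cdot v=\sum_i v_i$. The key observation is that $\mathbf{1}=(1,1,\ldots,1)^t$ is a $G$-invariant vector in $W$ (permuting coordinates leaves it fixed), so $\mathbf{1}^{*}$ is a $G$-equivariant map from $W$ (with the permutation representation) to $\mathbb{C}$ (with the trivial representation). Tensoring $G$-equivariant maps with the identity on the remaining factors yields a $G$-equivariant map. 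Therefore $f_l$ is $G$-equivariant, and in particular it sends the $G$-invariant subspace $\mathcal{L}^G$ into the $G$-invariant subspace $(\bigotimes_{u\neq l}W_u)^G$, which is exactly the claim.

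For readers preferring a direct coordinate verification, assume $l=n$ for notational simplicity and write $q_{x_1,\ldots,x_{n-1}}=\sum_{x_n}p_{x_1,\ldots,x_n}$ for the coordinates of $f_n(p)$. Invariance $p\in \mathcal{L}^G$ means $p_{g(x_1),\ldots,g(x_n)}=p_{x_1,\ldots,x_n}$ for all $g\in G$. Then for any $g\in G$, the change of summation variable $x_n\mapsto g(x_n)$, which is a bijection of $\Sigma$, gives
\[
q_{g(x_1),\ldots,g(x_{n-1})}=\sum_{x_n}p_{g(x_1),\ldots,g(x_{n-1}),x_n}=\sum_{x_n}p_{g(x_1),\ldots,g(x_{n-1}),g(x_n)}=\sum_{x_n}p_{x_1,\ldots,x_n}=q_{x_1,\ldots,x_{n-1}},
\]
so $f_n(p)$ is $G$-invariant.

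There is no real obstacle here; the statement is essentially a functoriality check, the only content being the $G$-invariance of the all-ones vector $\mathbf{1}$, which is immediate because $G$ acts by permuting coordinates. The lemma will be used later as the base step allowing an inductive argument on $n$, since pruning a leaf from $T$ corresponds precisely to marginalizing over the associated factor.
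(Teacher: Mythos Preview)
Your proof is correct, and your coordinate verification is essentially identical to the paper's own argument (reindex the sum via $s\mapsto g(s)$ and use $p_{g(x_1),\ldots,g(x_n)}=p_{x_1,\ldots,x_n}$). The additional conceptual framing via the $G$-equivariance of $f_l=\mathrm{id}\otimes\cdots\otimes\mathbf{1}^{*}\otimes\cdots\otimes\mathrm{id}$ is a clean extra packaging of the same idea, but not a different route.
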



\begin{proof}
A tensor $p=\sum_{x_1,\ldots,x_n\in \Sigma} p_{x_1,\ldots,x_n} x_1 \otimes \ldots \otimes x_n$ is $G$-invariant if and only if $p_{x_1,\ldots,x_n} = p_{gx_1,\ldots,gx_n}$ for any $g\in G$.
	Without loss of generality, we may assume that the leaf $l$ is the last leaf of $T$.
	If $q=f_l(p)$, then $	q_{x_1,\ldots,x_{n-1}}=\sum_{s\in \Sigma} p_{x_1,\ldots,x_{n-1},s}
$ and for any $g\in G$,
\begin{eqnarray*}
	q_{\op{g}{x_1},\ldots,\op{g}{x_{n-1}}}=\sum_{s\in \Sigma} p_{\op{g}{x_1},\ldots,\op{g}{x_{n-1}},s}=\sum_{x=g^{-1}s\in \Sigma} p_{\op{g}{x_1},\ldots,\op{g}{x_{n-1}},\op{g}{x}}
\end{eqnarray*}
The claim follows trivially from here.
\end{proof}



For complex parameters, let $U\subset \mathcal{M}$ be the open subset of normalized $\kappa\times \kappa$ matrices defined as
\[U=\{M\in \mathcal{M} \mid \, {M}_{i,i} \neq {M}_{j,i} \; \mbox{ if } i\neq j\}.\]
If we work over the real field, this set includes an important class of matrices: a matrix $M$ is \emph{DLC} (for Diagonal Largest in Column) if ${M}_{i,i} > {M}_{j,i} $ for all $i\neq j$; the set of DLC matrices has played an important role in the phylogenetics literature as DLC transition matrices can be unequivocally identified from the distribution at the leaves of a tree, see \cite{chang1996}. 

\subsection{The tripod}

Consider a Markov process on the tripod tree $T$ of Figure \ref{figure:tripod} with leaves $a,b,c$, transition matrices $A$, $B$, $C$, and distribution $\pi$ at the root as in Example \ref{ex_tripod}.

\begin{prop}\label{thm_tripod} Let $T$ be the tripod tree and let $p=\phi_T(\pi;A,B,C)$ be the image of non-singular parameters.
If $p$ is $G$-invariant and one of the transition matrices lies in $U$, then $\pi$ is $G$-invariant and the matrices $A,B,C$ are $G$-equivariant.
\end{prop}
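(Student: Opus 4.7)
The plan is to adapt Chang's classical identifiability argument for the general Markov model on the tripod to the $G$-equivariant setting. The $G$-invariance of $p$ supplies a \emph{second} parameterisation of the same tensor, and Chang's theorem together with the $U$-hypothesis will be used to force the two parameterisations to agree in the way required by $G$-equivariance.

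First, for each $g\in G$, the identity $p_{g(x),g(y),g(z)}=p_{x,y,z}$ combined with \eqref{eq:tripod} shows that $p=\phi_T(\pi;\,A',\,B',\,C')$, where $M'_{i,x}:=M_{i,g(x)}$ is obtained by permuting the columns of $M$ by $g$. Thus the same distribution arises from two sets of non-singular parameters. By Chang's theorem on identifiability of parameters of the general Markov model on the tripod, these two parameterisations must differ only by a relabelling of the internal (root) states: there is a permutation $\tau=\tau_g\in\mathfrak{S}_\kappa$ satisfying $\pi_{\tau(i)}=\pi_i$ and $M_{\tau(i),g(j)}=M_{i,j}$ for every $M\in\{A,B,C\}$ and every $i,j\in\Sigma$, with a single $\tau$ governing all three matrices simultaneously.

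It then remains to prove $\tau=g$, after which the two conclusions read off immediately: $\pi_{g(i)}=\pi_i$ is the $G$-invariance of $\pi$, and the identity $M_{g(i),g(j)}=M_{i,j}$ is the $G$-equivariance of each $M\in\{A,B,C\}$. Specialising the matrix identity at $j=i$ gives $A_{\tau(i),g(i)}=A_{i,i}$, so the diagonal value of column $i$ reappears at row $\tau(i)$ of column $g(i)$. Combining this with the $U$-hypothesis on $A$ (which distinguishes the diagonal entry of each column from all other entries of that column) and with the invertibility of $A$ (which prevents degenerate row collapses), one concludes $\tau(i)=g(i)$ for every $i\in\Sigma$; iterating over $g\in G$ finishes the argument.

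The hard part is the promotion $\tau=g$. The $U$-hypothesis only isolates the diagonal entry within each individual column, so the naive pigeonhole (``$A_{i,i}$ must sit at the diagonal $(g(i),g(i))$ of column $g(i)$, hence $\tau(i)=g(i)$'') is not quite strong enough on its own, since it tacitly requires that different diagonal entries of $A$ carry different values. The genuine argument will have to combine $U$ with the invertibility of $A$ to rule out the row configurations that would allow a permutation $\tau\neq g$ to be consistent with the full identity $A_{\tau(i),g(j)}=A_{i,j}$, and may also need to leverage that the \emph{same} $\tau$ must simultaneously control $B$ and $C$, together with the $\tau$-invariance of $\pi$, to close the argument.
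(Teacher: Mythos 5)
Your reduction to Chang--Kruskal uniqueness is correct and is, in substance, the same route the paper takes: the paper does not invoke Kruskal's theorem by name, but re-derives the ``unique up to a relabelling of the root states'' statement by showing that $X=BK_g^{-1}B^{-1}$ simultaneously diagonalizes the matrices $(J^{ab})^{-1}P^{s}=B^{-1}D_{C_s}B$ and is therefore a permutation matrix $K_{\sigma_g}$; this $\sigma_g$ is your $\tau_g$ up to inversion, and both arguments then face the identical final step of identifying that permutation with $g$. You have also diagnosed the difficulty of that step exactly right. The problem is that you stop there: the promotion $\tau_g=g$ is announced but never proved, and it cannot be proved from the stated hypotheses, because with $U$ as defined the implication (and hence the Proposition as literally stated) fails. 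Take $\kappa=3$, $G=\langle g\rangle\leq\mathfrak{S}_3$ with $g=(123)$, $\pi=(\tfrac13,\tfrac13,\tfrac13)$ and
\[
A=B=C=M=\begin{pmatrix} 1/2 & 3/10 & 1/5\\ 3/10 & 1/5 & 1/2\\ 1/5 & 1/2 & 3/10\end{pmatrix}.
\]
One checks entry by entry that $M_{i,g(j)}=M_{g(i),j}$ for all $i,j$, so that
$p_{g(x),g(y),g(z)}=\tfrac13\sum_i M_{i,g(x)}M_{i,g(y)}M_{i,g(z)}=\tfrac13\sum_i M_{g(i),x}M_{g(i),y}M_{g(i),z}=p_{x,y,z}$
and $p$ is $G$-invariant; the parameters are non-singular ($\det M=-7/100$); every column of $M$ has three pairwise distinct entries, so $M\in U$; and a single relabelling $\tau$ governs $A$, $B$, $C$ and fixes $\pi$. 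Yet $M_{g(1),g(1)}=M_{2,2}=1/5\neq 1/2=M_{1,1}$, so $M$ is not $G$-equivariant. This also defeats your fallback hope of closing the argument by exploiting that the same $\tau$ controls all three matrices and $\pi$ simultaneously.

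The obstruction is exactly the one you half-identified: $U$ makes $M_{j,j}$ different from the other entries of column $j$, but gives no intrinsic characterization of it within that column. The identity $M_{\tau(i),g(j)}=M_{i,j}$ says column $g(j)$ is column $j$ with its rows permuted; this places the value $M_{j,j}$ in a unique row of column $g(j)$, but nothing forces that row to be the diagonal one, since $M_{g(j),g(j)}$ need not equal $M_{j,j}$. Both the step and the statement do become true if $U$ is replaced by the DLC condition (as in the paper's remark for real parameters): then the distinguished entry of each column is intrinsically its unique maximum, a row bijection matching column $j$ with column $g(j)$ must match maxima, and $\tau(j)=g(j)$ follows, after which everything you wrote goes through. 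Be aware that the paper's own treatment of this step (``the entry $C_{s,s}$ is at row $g(s)$ of $D_{C_{g(s)}}$'') presupposes $C_{g(s),g(s)}=C_{s,s}$, i.e.\ the equivariance being proved, so it carries the same gap; you will not find the missing argument there.
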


Our proof is inspired by \cite{chang1996}. Without loss of generality we can assume that $C$ is in $U$.
First we consider the image of $p$ by the marginalization map over leaf $c$, $f_{c}(p) \in W\otimes W$, and from it we define the matrix $J^{ab}$ as $J^{ab}_{i,j}=(f_c(p))_{i,j}=\sum_kp_{ijk}$.
Then, from \eqref{eq:tripod} we get
\begin{equation}\label{eq_J}
J^{ab}=A^t\, \mathrm{diag}(\pi)\, B.
\end{equation}



Given $s\in \Sigma$, write $P^{s}$ for the $\kappa\times \kappa$ matrix given as the slice of $p$ with fixed third coordinate $s$ (with rows labelled by the states in $a$ and columns labelled by the states in $b$), {that is, $P^{s}_{i,j}=p_{i,j,s}$, for $i,j\in \Sigma$.} 
As we have non-singular parameters, the matrix $J^{ab}$ is invertible and we can consider the matrix
\[Q^s =(J^{ab})^{-1}\, P^{s}.\]

We need the following lemma.
\begin{lema}\label{lem_propietats} If $p$ is $G$-invariant, then
\begin{itemize}
\item[(i)] $J^{ab}$ is $G$-equivariant.
\item[(ii)]\label{eq_pgamma}
$K_g\, P^{s}\, K_g^{-1} = P^{\op{g^{-1}}{s}}$, for all $g\in G$.
\item[(iii)] $K_g\, Q^s\, K_g^{-1} = Q^{\op{g^{-1}}{s}}  $ for all $g\in G$. In particular, $Q^{s}$ and $Q^{\op{g^{-1}}{s}}$ are similar matrices and share the same eigenvalues.
\end{itemize}
\end{lema}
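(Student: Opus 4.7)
All three statements follow from routine index manipulations using $G$-invariance of $p$ and the conjugation formula $(K_g M K_g^{-1})_{ij} = M_{g(i),g(j)}$. I will handle them in order $(i), (ii), (iii)$ since each builds on the previous one.

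For $(i)$, my strategy is to invoke Lemma \ref{marginalisation}: since $p \in \LL^G$ is $G$-invariant, so is $f_c(p) \in W \otimes W$. I then translate this tensor-theoretic invariance into the matrix-theoretic statement that $J^{ab}$ is $G$-equivariant. Concretely, I would write
\[
J^{ab}_{g(i),g(j)} \;=\; \sum_{k\in\Sigma} p_{g(i),g(j),k} \;=\; \sum_{k\in\Sigma} p_{g(i),g(j),g(g^{-1}(k))} \;=\; \sum_{k\in\Sigma} p_{i,j,g^{-1}(k)} \;=\; \sum_{k'\in\Sigma} p_{i,j,k'} \;=\; J^{ab}_{i,j},
\]
where the middle equality uses $G$-invariance of $p$ and the last one reindexes via $k' = g^{-1}(k)$. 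This is exactly the condition $K_g J^{ab} K_g^{-1} = J^{ab}$.

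For $(ii)$, the computation is a direct application of the conjugation identity. By definition $(P^s)_{ij} = p_{i,j,s}$, so
\[
(K_g\, P^s\, K_g^{-1})_{ij} \;=\; (P^s)_{g(i),g(j)} \;=\; p_{g(i),g(j),s} \;=\; p_{i,j,g^{-1}(s)} \;=\; (P^{g^{-1}(s)})_{ij},
\]
where the third equality uses $G$-invariance of $p$ applied to the permutation $g^{-1}$ (equivalently, $p_{i,j,g^{-1}(s)} = p_{g(i),g(j),g(g^{-1}(s))}$).

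For $(iii)$, I combine $(i)$ and $(ii)$ together with the remark in Section~2 that the set of $G$-equivariant matrices is closed under inversion; hence $(J^{ab})^{-1}$ is $G$-equivariant, i.e.\ $K_g (J^{ab})^{-1} K_g^{-1} = (J^{ab})^{-1}$. Then
\[
K_g\, Q^s\, K_g^{-1} \;=\; \bigl(K_g (J^{ab})^{-1} K_g^{-1}\bigr)\bigl(K_g\, P^s\, K_g^{-1}\bigr) \;=\; (J^{ab})^{-1}\, P^{g^{-1}(s)} \;=\; Q^{g^{-1}(s)}.
\]
Since $Q^s$ and $Q^{g^{-1}(s)}$ are conjugate via the invertible matrix $K_g$, they are similar and therefore share the same eigenvalues. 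There is no real obstacle here; the only delicate point is keeping the conventions for $K_g$ straight, in particular that conjugation by $K_g$ permutes matrix indices by $g$ (not $g^{-1}$), which is why $P^s$ gets sent to $P^{g^{-1}(s)}$ rather than $P^{g(s)}$.
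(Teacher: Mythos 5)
Your proof is correct and follows essentially the same route as the paper's: part $(i)$ via Lemma \ref{marginalisation}, part $(ii)$ by the direct index computation $(K_g P^s K_g^{-1})_{ij}=p_{g(i),g(j),s}=p_{i,j,g^{-1}(s)}$, and part $(iii)$ by combining the two with the definition of $Q^s$ (your explicit insertion of $K_g^{-1}K_g$ and the observation that $(J^{ab})^{-1}$ inherits $G$-equivariance just spells out what the paper leaves implicit). The conventions for $K_g$ are handled correctly throughout.
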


\begin{proof}
\begin{itemize}
 \item[$(i)$] By Lemma \ref{marginalisation}, $f_c(p)$ is a $G$-invariant  tensor and hence the matrix $J^{ab}$ is $G$-equivariant.
 \item[$(ii)$] Note that $P^{s}_{\op{g}{i},\op{g}{j}}=p_{\op{g}{i},\op{g}{j},{s}}$, which is equal to $p_{i, j, g^{-1}\, {s}}$ because $p$ is $G$-invariant. Thus, $K_g P^{s} K_g^{-1} =P^{g^{-1} {s}}$.
 \item[$(iii)$] {Since $J^{ab}$ is $G$-equivariant, so is $(J^{ab})^{-1}$ (Remark \ref{Kg_inverse}).} Now, the claim follows directly from $(i)$ and $(ii)$ and the definition of $Q^s$.
 %
\end{itemize}
\end{proof}

We can proceed to prove Proposition \ref{thm_tripod} now.

\begin{prooftripod}
Given $s \in \Sigma$, we have
\begin{equation}\label{eq_My}
Q^s=B^{-1}\, \mathrm{diag}(C_{s})\, B.
\end{equation}
 Indeed, note first that $J^{ab}=A^t\, \mathrm{diag}(\pi)\, B$ by equation \eqref{eq_J}. On the other hand, $P^{s}$ is equal to $A^t\, \mathrm{diag}(\pi)\,\mathrm{diag}(C_{s})\, B$ because
  for any $i,j$ we have
  \[P^{s}_{i,j}=p_{i,j,{s}}= \sum_{x \in \Sigma}\pi_{x}\, C_{x,{s}}\,A_{x,i}\,B_{x,j}.\]
Hence, $(J^{ab})^{-1}\, P^{s}= B^{-1}\,\mathrm{diag}(\pi)^{-1} \, (A^t)^{-1}\,A^t\, \mathrm{diag}(\pi)\,\mathrm{diag}(C_{s})\, B$, and equation \eqref{eq_My} follows.

Now, fix $g\in G$. By Lemma   \ref{lem_propietats} $(iii)$  we have
$Q^{\op{g}{s}}=K_g^{-1}Q^sK_g$ for any $s \in \Sigma$. Applying \eqref{eq_My} to $Q^s$ and $Q^{\op{g}{s}}$ we obtain
\begin{equation}\label{eq_Dy}
\mathrm{diag}(C_{\op{g}{s}})=B\,Q^{\op{g}{s}}\,B^{-1}= (B\,K_g^{-1}B^{-1}) \, \mathrm{diag}(C_{s})\, (B\,K_g\,B^{-1}).
\end{equation}
Thus, the matrix $X_g:=B\,K_g^{-1}B^{-1}$ diagonalizes all matrices $\mathrm{diag}(C_{\op{g}{s}})$ (equivalently all  $\mathrm{diag}(C_{s})$, $s\in \Sigma$) and its columns are common eigenvectors to all these diagonal matrices.
We claim that the common eigenspaces to all $\mathrm{diag}(C_{s})$, ${s}\in \Sigma$, have dimension one (even if there are repeated eigenvalues). Indeed, if columns $i,j$ of $X_g$ belong to the same eigenspace for all ${s}$, then looking at the eigenvalues we would have $C_{i,{s}}=C_{j,{s}}$ for all ${s}$. But this is not possible because $C$ has rank $\kappa$, so all its rows are different.

In particular, the columns of $X_g$ are multiples of the standard basis.
As the rows of $B$ are normalized, so are the rows of $B^{-1}$ and hence the rows of $X_g$. From this we obtain that  $X_g$ is a permutation matrix $K_{\sigma_g}$, for a certain permutation $\sigma_g$ (which may depend on $g$ a priori).

Note that the entry $C_{s,s}$ is at row $\op{g}{s}$ of $\mathrm{diag}(C_{\op{g}{s}})$ and it is at row $\sigma_\op{g^{-1}}{s}$ of $K_{\sigma_g}C_{s}K_{\sigma_g}^{-1}$.  As $\mathrm{diag}(C_{\op{g}{s}})=K_{\sigma_g}\mathrm{diag}(C_{s})K_{\sigma_g}^{-1}$ and $C$ belongs to $U$, we have $\op{g}{s}= \op{\sigma_g^{-1}}{s}$.
Thus, for any  ${s}\in \Sigma$ we get $\op{(\sigma_g\circ g)}{s}={s}$ so that $\sigma_g=g^{-1}$. Then we have $X_g=K_g^{-1}$ and
 $K_gBK_g^{-1}=B$. As this argument applies to any $g\in G$, we have that $B$ is $G$-equivariant.

From this we also obtain that $\pi$ is $G$-invariant. Indeed, we consider $\rho=f_a(f_c(p))$, which is $G$-invariant by Lemma \ref{marginalisation}; then  $\pi^t= \rho^t B^{-1}$ and as  $B^{-1}$ is $G$-equivariant, the claim follows.

Finally we have that $A$ and $C$ are also G-equivariant. Indeed, if $J^{bc}$ is the matrix obtained from $f_a(p)$, then it is a $G$-equivariant matrix. Moreover,
$J^{bc}=B^t\,\mathrm{diag}(\pi)\, C$ and $C=\mathrm{diag}(\pi)^{-1}\, B^{-t}\,J^{bc}$ is the product of three $G$-equivariant matrices.
Analogously, exchanging the roles of $a$ and $b$ we can also prove that $A$ is $G$-equivariant.
\end{prooftripod}

\begin{rmk}\rm Note that the previous proposition is still true when we restrict to real parameters and change $U$ to the set of DLC matrices.
\end{rmk}

\subsection{The general case}

Let $T$ be a rooted tree with $n$ leaves and consider a Markov process on it. Given $u,v\in V(T)$, denote by $path(u,v)=(e_1,\ldots,e_m)$ the sequence of edges of $T$ from $u$ to $v$ (so that $u$ is the first node of $e_1$ and $v$ is the last node of $e_m$).

\begin{thm}\label{prop_main}
Let $T$ be a phylogenetic tree and let $p=\phi_T(\pi,\{M^e\}_{e\in E(T)})$ be a point in the image of $\phi_T$. If $p$ is $G$-invariant, the parameters are non-singular, and the transition matrices are in $U$, then $\pi$ is $G$-invariant and all matrices $M^e$ are $G$-equivariant.
\end{thm}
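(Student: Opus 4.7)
The plan is to induct on $n$, the number of leaves of $T$. The base case $n=3$ is exactly Proposition \ref{thm_tripod}, so all the work is in the inductive step $n\geq 4$.

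For the inductive step, I would first use that every phylogenetic tree with at least four leaves admits a cherry, and that such cherries are plentiful enough that I can pick one $\{l_1,l_2\}$ with common parent $v$ and $l_1,l_2\neq r$. Choose any third leaf $l_3$ in a component of $T\setminus\{v\}$ not containing $l_1$ or $l_2$ and consider
\[
p_v \,:=\, f_{L(T)\setminus\{l_1,l_2,l_3\}}(p),
\]
which is $G$-invariant by repeated application of Lemma \ref{marginalisation}. By the Markov factorization at $v$, the tensor $p_v$ equals $\phi_{T_v}(\pi_v;M_{v,l_1},M_{v,l_2},\hat M)$ on the tripod $T_v$ centered at $v$ with leaves $l_1,l_2,l_3$, where $\pi_v$ is the marginal distribution at $v$ and $\hat M$ is an effective transition matrix along the path $v\to l_3$ obtained by composing the matrices in the path with the re-rooting modifications of Remark \ref{change_root}. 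Since $M_{v,l_1}\in U$, Proposition \ref{thm_tripod} gives that $\pi_v$ is $G$-invariant and that $M_{v,l_1}$ and $M_{v,l_2}$ are $G$-equivariant.

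Next I would reduce to a phylogenetic tree $T^\star$ with $n-1$ leaves and apply the inductive hypothesis. If $\deg_T(v)\geq 4$ I take $T^\star:=T\setminus\{l_1\}$, which is still phylogenetic, and let $p^\star:=f_{l_1}(p)$; this tensor is $G$-invariant by Lemma \ref{marginalisation} and equals $\phi_{T^\star}(\pi,\{M_e\}_{e\in E(T^\star)})$. If $\deg_T(v)=3$ the pruning would create a degree-two vertex at $v$, so I instead take $T^\star:=T\setminus\{l_1,l_2\}$ and reinterpret $v$ as a new leaf of $T^\star$. Exploiting the $G$-equivariance of $M_{v,l_2}$ just proved, I define
\[
p^\star_{v=s,\,\mathrm{rest}} \,:=\, \sum_{y\in\Sigma}\,(M_{v,l_2}^{-1})_{y,s}\,\bigl(f_{l_1}(p)\bigr)_{l_2=y,\,\mathrm{rest}},
\]
and a direct calculation based on $p_{l_1=x,l_2=y,\mathrm{rest}}=\sum_s\pi_v(s)\,M_{v,l_1}(s,x)\,M_{v,l_2}(s,y)\,P(\mathrm{rest}\mid v=s)$ identifies $p^\star$ with $\phi_{T^\star}(\pi,\{M_e\}_{e\in E(T^\star)})$. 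Because $M_{v,l_2}^{-1}$ is $G$-equivariant, $p^\star$ is $G$-invariant. In either sub-case the inherited parameters on $T^\star$ are non-singular with matrices in $U$, so the inductive hypothesis yields $G$-invariance of $\pi$ and $G$-equivariance of every $M_e$ with $e\in E(T^\star)$; combined with the cherry edges from the previous step, this covers all of $E(T)$.

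The hard part is precisely the degree-two obstruction in the trivalent case: after pruning only $l_1$, suppressing the resulting degree-two vertex $v$ forces $M_{w,v}$ and $M_{v,l_2}$ to combine into their product, which in general does not lie in $U$ and hence breaks the inductive hypothesis. The remedy is to avoid suppression by promoting $v$ to a leaf of $T^\star$; this needs $M_{v,l_2}$ as a $G$-equivariant invertible matrix to transport the data cleanly, which is exactly what the tripod step provides. The remaining bookkeeping (choosing a cherry that avoids the root, and checking compatibility of the factorization with Remark \ref{change_root} when $r$ lies on different sides of $v$) is routine.
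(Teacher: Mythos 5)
Your proof is correct and follows essentially the same strategy as the paper's: induct on the number of leaves, marginalize onto a tripod at a cherry so that Proposition \ref{thm_tripod} handles the two pendant matrices, then transport the $G$-invariance of $p$ to the pruned tree with the cherry parent promoted to a leaf and invoke the inductive hypothesis. The differences are cosmetic: the paper first refines to a trivalent tree by inserting identity matrices (so your $\deg_T(v)\geq 4$ case never arises), it establishes the $G$-invariance of the pruned tensor by conjugating the slice matrices $P^{\mathbf{s}'}=M_{n-1}^t\,\Phi_{\mathbf{s}'}\,M_n$ rather than by inverting a single cherry matrix as you do, and it records the trivial $n=2$ base case, which your induction omits.
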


\begin{figure}
	\begin{center}
	 \includegraphics[scale=0.5]{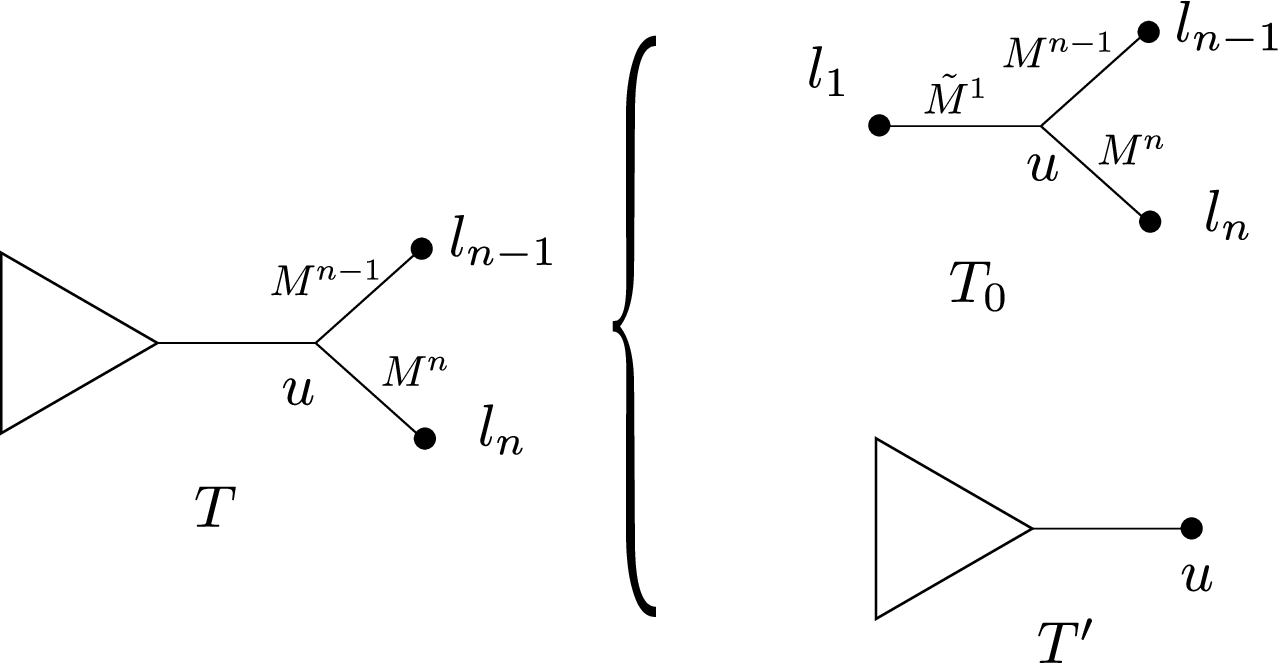}
	\end{center}
\caption{\label{aux_figure} The trivalent tree $T$ has a cherry with leaves $l_{n-1}$ and $l_n$, and parent node $u$. If $l_1$ is another leaf of $T$, and we marginilize over all leaves except for $l_1,l_{n-1},l_n$ we obtain the tripod tree $T_0$ with transition matrices  $\tilde{M}^1,M^{n-1},M^n$. The tree $T'$ is the trivalent tree obtained from $T$ by pruning the leaves $l_{n-1}$ and $l_n$.}

\end{figure}

\begin{proof}
 We proceed by induction on the number  $n$ of leaves  of $T$.

 {For $n=2$ we have {necessarily two nodes $a,b$ and a single edge $e$ (since the tree is not allowed to have degree-2 vertices) and we can root the tree at $a$.} Marginalizing over the leaf $b$ and applying Lemma \ref{marginalisation}, we obtain that $\pi = f_b(p)$ is $G$-invariant, given that $p$ is $G$-invariant. Rewriting $p$ as a matrix $P$ with rows (resp. columns)  labeled by states at leaf $a$ (resp. $b$) we obtain a $G$-equivariant matrix. On the other hand we have $P=\mathrm{diag}(\pi) M^e$. As $\mathrm{diag}(\pi)$ is invertible, we obtain that $M^e=\mathrm{diag}(\pi)^{-1}P$ is the product of two $G$-equivariant matrices and hence it is $G$-equivariant.}

 The case $n=3$ is solved by Proposition \ref{thm_tripod}.

 Now consider $n>3$. 
 We can assume that the tree is trivalent. 
 Indeed, if $T$ is not trivalent, there exists a (not necessarily unique) trivalent tree $T'$ such that $T$ can be obtained by collapsing edges on $T'$ (in other words, $T'$ is a refinement of $T$, see \cite[\S 2.4]{SteelPhylogeny}). In this case, it is enough to consider $T'$ instead of $T$ associating the identity matrix to the edges in $T'$ that are not in $T$. Then $p$ is the image by $\phi_{T'}$ of the new parameters, which still satisfy the hypotheses of the theorem (because the identity matrix is non-singular, $G$-equivariant and lies in $U$).

 Any trivalent tree $T$ has a cherry
 and, by reordering the leaves if necessary, we can assume that this cherry is composed of leaves $l_{n-1}$ and $l_n$.
By Remark \ref{change_root_equivariance}, we can also assume that the tree is rooted at the parent node $u$ of $l_{n-1}$ and $l_n$ (see Figure \ref{aux_figure}).
%
 %
 %
 By marginalizing $p$ over all leaves except for $l_1$, $l_{n-1}$ and $l_n$, we obtain the tensor $p_0=\phi_{T_0}(\pi,{\tilde{M^1}},{M^{n-1}},{M^n})$ where  $T_0$ is the tripod tree with leaves $l_1$, $l_{n-1}$ and $l_n$ (see Figure \ref{aux_figure}), and ${\tilde{M^1}}=\prod_{e\in path(u,l_1)} {M^e}$ is the transition matrix corresponding to the concatenation of the edges in $path(u,l_1)$. Note that $p_0$ is $G$-invariant by Lemma \ref{marginalisation}. Since ${M^{n-1}}$ lies in $U$, we can apply Proposition \ref{thm_tripod} to conclude that $\pi$ is $G$-invariant and both ${M^{n-1}}$ and ${M^n}$ are $G$-equivariant. \\
Next, consider the tree $T'$ obtained by pruning the leaves $l_{n-1}$ and $l_n$ {(that is, remove from $T$ the two pendant edges adjacent to leaves $l_n$ and $l_{n-1}$)}. Write $L'=L(T)\setminus \{l_{n-1},l_n\}$ so that $L(T')=L'\cup \{u\}$.  Note that as $T$ was assumed to have no nodes of degree two (by definition of phylogenetic tree), $T'$ also satisfies this assumption. To finish the proof it is enough to prove that the tensor $p'=\phi_{T'}(\pi,\{{M^e}\}_{e\in E(T')})$ is $G$-invariant and apply the induction hypothesis to deduce that all transition matrices $\{{M^e}\}_{e\in E(T')}$ are $G$-equivariant.
Given states $s_1,\ldots,s_{n-2}\in \Sigma$ associated to the leaves in $L'$, we denote $\mathbf{s}=(s_1,\ldots,s_{n-2})$ and $\varphi_{\mathbf{s}}(y)=p'_{s_1,\ldots,s_{n-2},y}$. If $g\in G$, we write $g\cdot \mathbf{s}=(g s_1,\ldots,g  s_{n-2})$.
Keeping the notation of Proposition \ref{thm_tripod}, write $P^{\mathbf{s}}$ for the $\kappa \times \kappa$-matrix whose entries are defined by
 \begin{eqnarray*}
 P^{\mathbf{s}}(j,k) = p_{s_1,\ldots,s_{n-2},j,k}, \quad j,k\in \Sigma.
 \end{eqnarray*}

From the parameterization we have
\begin{eqnarray*}
	P^{\mathbf{s}}(j,k) = p_{\mathbf{s},j,k} = \sum_{i\in \Sigma} p'_{\mathbf{
			s}',i} \cdot {M^{n-1}_{i,j}} \cdot  {M^n_{i,k}} = \sum_{i\in \Sigma} \varphi_{\mathbf{s}}(i) \cdot {M^{n-1}_{i,j}} \cdot  {M^n_{i,k}}
\end{eqnarray*}
In matrix notation, if $\Phi_{\mathbf{s}}=\mathrm{diag} ((\varphi_{\mathbf{s}}(i))_{i\in \Sigma})$, then
$P^{\mathbf{s}}= {(M^{n-1})^t} \, \Phi_{\mathbf{s}} \, {M^n}$.
Since the parameters are nonsingular, we get that
\begin{eqnarray*}
	\Phi_{\mathbf{s}}= {(M^{n-1})^{-t}} P^{\mathbf{s}} {{M^n}^{-1}}
\end{eqnarray*}
Moreover, since ${M^{n-1}}$ and ${M^n}$ are $G$-equivariant, we have that
\begin{eqnarray}\label{eq_Ks}
	K_g \Phi_{\mathbf{s}} K_g^{-1}=
K_g \; {(M^{n-1})^{-t}} P^{\mathbf{s}} {(M^n)^{-1}} \; K_g^{-1}=
    {(M^{n-1})^{-t}} K_g \; P^{\mathbf{s}}K_g^{-1} \; {(M^n)^{-1}}.
\end{eqnarray}

 Note that for every $g\in G$, we have $K_g P^{\mathbf{s}} K_g^{-1}=P^{g^{-1}\cdot \mathbf{s}}$ (indeed, the $G$-invariance of $p$ gives that the $(j,k)$-entry of $K_g P^{\mathbf{s}} K_g^{-1}$ is
 	$ P^{\mathbf{s}}(gj,gk) = p_{\mathbf{s},gj,gk} = p_{g^{-1}\cdot \mathbf{s},j,k}$).
 Thus, in \eqref{eq_Ks} we have
 $$K_g \Phi_{\mathbf{s}} K_g^{-1}={(M^{n-1})^{-t}} P^{g^{-1}\mathbf{s}}  {(M^n)^{-1}} = \Phi_{g^{-1} \mathbf{s}}.$$

Finally, for each $i\in \Sigma$, we have that
\begin{eqnarray*}
	p'_{g \cdot \mathbf{s},g \cdot i} = \varphi_{g \cdot \mathbf{s}}(g \cdot i) = \Phi_{g  \cdot \mathbf{s}} \, (g \cdot i,g \cdot i) =
	\big ( K_g \Phi_{g \cdot \mathbf{s}} K_g^{-1} \big )\, (i,i) =\Phi_{\mathbf{s}} (i,i) = \varphi_{\mathbf{s}}(i) = p'_{\mathbf{s},i},
\end{eqnarray*}
showing that the tensor $p'$ is $G$-invariant.
\end{proof}

\begin{rmk}\rm Up to here, the results in this section still hold if we restrict to parameters in $\mathbb{R}$ instead of $\mathbb{C}$, which makes more sense biologically speaking. We could even assume that the transition matrices are non-negative to have probability distributions. Henceforth, we consider algebraic varieties and we need to work with the complex field.
\end{rmk}

Before stating the main result, we recall a definition from \cite{casfermich2015}. A point $p \in V_T$ is said to be a point of \emph{no evolution} if $p=\phi_T(\pi,\mathbf{Id})$ for some $\pi \in W$ ($\mathbf{Id}$ means a collection of identity matrices, one for each edge). We write $W_{\neq 0}$ for the subset of vectors in $\mathcal{W}$ with non-zero coordinates.

\begin{thm}\label{thm_main}
	For any trivalent phylogenetic tree $T$ and any subgroup $G \leq \mathfrak{S}_{\kappa}$, the variety $V_T^G$ equals the irreducible component of the intersection $V_T \cap \LL^G$ that contains $\Im \phi_T^G$.
 Moreover, $V_T^G$ is the unique irreducible component of $V_T\cap \LL^G$ that contains points of no evolution $p^0_G=\phi_T(\pi,\mathbf{Id})$ for $G$-invariant vectors $\pi\in W_{\neq 0}$.
\end{thm}

\begin{proof}
 The set $\mathcal{U}=W_{\neq0} \times \prod_{e \in E(T)} U$  is Zariski-dense in $Par(T)$. Thus $\phi_T(\mathcal{U})$ is Zariski-dense in $\phi_T(Par(T))$, which is Zariski-dense in $V_T$. By Theorem \ref{prop_main}, we know that
	\begin{equation}\label{eq_phi}
	\phi_T(\mathcal{U})\cap \LL^G= \phi_T^G(\mathcal{U}\cap Par_G(T)).
	\end{equation}


 Consider a $G$-invariant distribution $\pi$ with $\pi_{x}\neq 0$ for any $x\in \Sigma$, and define $p^0_G=\phi_T^G(\pi;\mathbf{Id})\in \phi_T(\mathcal{U})$ the corresponding point of no evolution.  From \cite[Theorem 5.4 and Appendix 6.2]{CFM} we deduce that $p^0_G$ is a smooth point in $V_T$ (here is where we have to use the trivalent hypothesis, as smoothness has not yet been proven for no evolution points on non-trivalent trees).
Thus, there is a local biholomorphism between an open neighbourhood $\mathcal{W}$ (in standard complex topology) of $(\pi,\mathbf{Id})$ in $\mathcal{U}$ and an open neighbourhood $\phi_T(\mathcal{U}) \cap \mathcal{O}$ of $p^0_G$ in $\phi_T(\mathcal{U}) \subset V_T$, where $\mathcal O$ is a convenient open set of $\mathcal{L}$ (in standard topology) containing $p^0_G$: $\mathcal{W} \cong \phi_T(\mathcal{W})=\phi_T(\mathcal{U})\cap \mathcal{O}$.
Intersecting with $\mathcal{L}^G$ we have that $p^0_G$ belongs to
\begin{eqnarray*}
V_T \cap \mathcal{O} \cap \mathcal \LL^G
= \phi_T(\mathcal{W}) \cap \LL^G = \phi_T^G(\mathcal{W} \cap Par_G(T) )
\end{eqnarray*}
where the last equality is obtained from (\ref{eq_phi}) restricted to $\mathcal{W}$.


If $C_1,\dots,C_m$ are the irreducible components of $V_T\cap \mathcal{L}^G$ that meet $\mathcal{O}$, then taking the Zariski closure in the above expression we have
\[C_1\cup \dots \cup C_m= \overline{\phi_T^G(\mathcal{W}\cap Par_G(T))}=V_T^G.\] As $V_T^G$ is irreducible, we get $m=1$ and $V_T^G$ coincides with an irreducible component of $V_T\cap \mathcal{L}^G$. Moreover, since $\mathcal{W} \cap Par_G(T)$ is irreducible and biholomorphic to its image, $p^0_G$ is contained in a unique irreducible component of $V_T^G\cap \LL^G$.
\end{proof}

As a consequence of Theorem \ref{prop_main} we also obtain that the restriction to a certain equivariant submodel can be done step by step by considering intermediate submodels:

\begin{corollary}\label{cor_subgrup} Consider two groups $G_1$, $G_2$ such that $G_1\leq G_2\leq \mathfrak{S}_{\kappa}$ and let $U$ be the open subset of the space of matrices introduced above. Then,
if $\mathcal{U}=W_{\neq 0} \times \prod_{e \in E(T)} U$ and $T$ is a trivalent phylogenetic tree, we have
\begin{enumerate}
    \item[(i)] $\phi_T^{G_2}(\mathcal{U}\cap Par_{G_2}(T))=\phi_T^{G_1}(\mathcal{U}\cap Par_{G_1}(T))\cap \LL^{G_2}$.
\item[(ii)]  $V_T^{G_2}$ is the irreducible component of $V_T^{G_1} \cap \LL^{G_2}$ that contains $\Im \phi_T^{G_2}$.
\end{enumerate}
\end{corollary}

\begin{proof}
   \begin{enumerate}
   \item[(i)] Note that the definition of the open set $\mathcal{U}$ is independent of the subgroup $G$. From (\ref{eq_phi}), we have that $\phi_T^{G_1}(\mathcal{U}\cap Par_{G_1}(T)) = \phi_T(\mathcal{U})\cap \LL^{G_1}$.
    Since $\LL^{G_2}\subseteq \LL^{G_1}$, we deduce that
    $\phi_T^{G_1}(\mathcal{U}\cap Par_{G_1}(T)) \cap \LL^{G_2} = \phi_T(\mathcal{U})\cap \LL^{G_2} = \phi_T^{G_2}(\mathcal{U}\cap Par_{G_2}(T))$, the last equality again by (\ref{eq_phi}).
    \item[(ii)] 
    By Theorem \ref{thm_main}, $V_T^{G_1}$ is an irreducible component of $V_T\cap \mathcal{L}^{G_1}$, so the irreducible components of $V_T^{G_1}\cap  \mathcal{L}^{G_2}$ are a subset of the irreducible components of $V_T\cap \mathcal{L}^{G_2}$ (note that $\mathcal{L}^{G_2}\subseteq \mathcal{L}^{G_1}$).
    Since $\Im \phi_T^{G_2} \subseteq \Im \phi_T^{G_1}$, the irreducible component of $V_T\cap \LL^{G_2}$ that contains $\Im \phi_T^{G_2}$ equals the irreducible component of $V_T^{G_1} \cap \LL^{G_2}$ that contains $\Im \phi_T^{G_2}$. Finally,
    Theorem \ref{thm_main} claims that this component is $V_T^{G_2}$.
      \end{enumerate}
\end{proof}

\begin{rmk}\label{rmk_edges}\rm
%
For phylogenetic reconstruction purposes it is important to note that, if $T'$ is another tree with leaf set $L$ not obtained by collapsing an edge in $T$, then $V^G_{T'}$ is not an irreducible component of $V_T\cap \LL^G$. Indeed, there is an edge split $A|B$ on $T$ that is not an edge split in $T'$. {Thus, if $p \in \Im \phi_{T'}^G$ is a generic point, by Theorem \ref{thm_AR} we get $p\notin V_T$.}
So, $V_{T'}^G$ is not contained in $V_T$ and cannot be contained in $V_T\cap \LL^G$ either.

Theorem \ref{thm_main}
implies that if we have a  data point, say $\hat{p}$, which is an approximation of a theoretical distribution $p\in V_T^G$ generated under a $G$-equivariant model (with generic parameters), it is enough to verify that $\hat{p}$ lies on (or is close to) the variety $V_T$ to deduce that $T$ is the closest tree topology for the data. That is, we do not need specific generators of $I_T^G$.
%

Moreover, if $\hat{p}$ is a data point close to a generic point of no evolution $p^0_G\in V_T^G$ for a trivalent tree $T$ (as in the case of biological data), Theorem \ref{thm_main} ensures that $\hat{p}$ is close to a unique irreducible component of $V_T\cap\mathcal{L}^G$, namely $V_T^G$. The standard simplex where biological data lies can meet  $V_T\cap\mathcal{L}^G$ in other irreducible components, but none of them is close to points of no evolution.
\end{rmk}

\section{Edge invariants for equivariant models}

{The aim of this section is to prove that, for a tensor $p\in \LL^G$, imposing rank conditions on the decomposition of a flattening matrix into isotypic components (in the sense of Draisma-Kuttler work \cite{Draisma}) is equivalent to imposing the vanishing of $(\kappa+1)\times (\kappa+1)$ minors if $p$ belongs to a certain open subset. The proof follows the ideas we used in the examples of Section 3 and the result allows obtaining equations for $V_T^G$ by saturating an ideal generated by $(\kappa+1)\times (\kappa+1)$ minors with respect to a certain ideal (see Corollary \ref{cor:saturation}). 

To introduce Draisma and Kuttler's framework and to prove our result for any $G$-equivariant model, first we need to recall some tools of representation theory of groups.}

Given a permutation group $G\leq \mathfrak{S}_{\kappa}$, denote by
	$N_1,\ldots, N_s$ the inequivalent irreducible representations of $G$ and by $d_i=\dim \, N_i$ ($i=1,\ldots,s$) their dimensions.
	Given a linear representation $\rho:G \rightarrow GL(V)$, Maschke's theorem {(see for instance \cite{james2001representations})} establishes a decomposition
	\begin{eqnarray*}
		V=\oplus_{i=1}^s V_i,
	\end{eqnarray*}
	where the $V_i$ are the \emph{isotypic components}. Each $V_i$ is a $G$-submodule of $V$ isomorphic to several copies of the irreducible representation $N_i$: $V_i\cong N_i\otimes \CC^{m_i(V)}$. The value $m_i(V)$ is the \emph{multiplicity} of $V$ relative to the irreducible representation $N_i$.
Schur's lemma establishes that
\begin{eqnarray*}
\Hom_G(N_i,N_j)= \left \{ \begin{array}{cc}
\CC\cdot  \mathrm{Id} & \mbox{ if }i=j \\
0 & \mbox{otherwise.}
\end{array}
\right .
\end{eqnarray*}
We adopt the notation of \cite{CFM} and write $\F_i(V)\subseteq V$ for the subspace of $V$ given by the image of a particular {nonzero} element $v_i\in N_i$ by all $G$-equivariant homomorphisms from $N_i$ to $V$:
	\begin{eqnarray*}
		\F_i(V)=\{f(v_i)\mid f\in \Hom_G(N_i,V)\} \cong \CC^{m_i(V)}.
	\end{eqnarray*}
These subspaces represent the whole isotypic component $V_i$ as  $V_i\cong N_i\otimes \F_i(V)$.
If $V'$ is another $G$-module (or even the same $V$), any $G$-equivariant map 
$h:V \rightarrow V'$ induces by restriction to $\F_i(V)$ a linear map $h_i:\F_i(V) \rightarrow \F_i(V')$. We obtain a natural map
\begin{eqnarray} \label{iso_G}
 \Hom_G(V,V')\rightarrow \bigoplus_i \Hom_{\CC}(\F_i(V),\F_i(V')),
\end{eqnarray}
which is actually a linear isomorphism
%
(see Remark 4.1 of \cite{CFM}). Indeed, since $V_k\cong N_k\otimes \F_k(V)$ and $V'_k\cong N_k\otimes \F_k(V')$, we have
\begin{eqnarray*}
 \Hom_G(V_k,V_k') \cong 
 \Hom_{\CC} (\F_k(V),\F_k(V'))\otimes \Hom_G(N_k,N_k),
\end{eqnarray*}
which can be identified with $\Hom_{\CC}(\F_k(V),\F_k(V'))$ since $\Hom_G(N_k,N_k) = \CC\cdot  \mathrm{Id}_{N_k}$ (by Schur's lemma). This allows us to identify every $G$-equivariant map $h:V\rightarrow V'$ with a collection of linear maps $(h_1,\ldots,h_s)$,  $h_k:\F_k(V)\rightarrow \F_k(V')$, according to the decomposition
\begin{eqnarray*}
h=\sum_{k=1}^s h_k \otimes \mathrm{Id}_{N_k}.
\end{eqnarray*}

\vspace{3mm}
Back to the case of our primary interest, from now on we only consider linear representations $V=\otimes^r W$, $r\in \mathbb{N}$, induced by the permutation representation of $G$. 
In the simplest case, when $V=W$ is the restriction of permutation representation to the elements of $G$,
we denote $\F_k(W)$ as $\F_k$ and $m_k(W)$ as $m_k$. We assume that the irreducible representations $N_1,\ldots,N_s$ of $G$  are ordered so that
$m_k>0$ if $k=1,\dots,l$, and $m_k=0$ if $k\geq l+1$, and we denote as $\mathbf{m}=(m_1,\ldots,m_s)$ the collection of multiplicities.
%

If $A|B$ is a bipartition of $L$,
we denote
\begin{eqnarray*}
	W_A= \otimes_{u\in A} W_u \qquad W_B= \otimes_{v\in B} W_v
\end{eqnarray*}
and write $\F_k^A$ (resp. $\F_k^B$) for the subspaces $\F_k(W_A)$ (resp. $\F_k(W_B)$). {By choosing bases of $W_A$ and $W_B$ adapted to a decomposition into isotypic components, $flat_{A|B}$ becomes \emph{block-diagonal.}} 
%
%
Then the above isomorphism (\ref{iso_G}) can be written as
\begin{eqnarray}\label{aux:Schur}
		Tf_{A|B}: \LL^G=(W_A\otimes W_B)^G \; \cong \;\bigoplus_{k=1}^s \Hom_{\CC}(\F_k^A,\F_k^B) \,.
\end{eqnarray}
{We will refer to this isomorphism as the \emph{thin flattening relative to the bipartion $A|B$}. It  maps every tensor in $\LL^G$ to a collection of linear maps $\{h_k:\F_k^A \rightarrow \F_k^B\}_{k=1,\dots,s}$. By fixing bases of the linear spaces $\F_i^A$ and $\F_j^B$, $Tf_{A|B}(p)$ can be represented as a block-diagonal matrix. For simplicity and by a slight abuse of terminology, we will refer to this matrix using the same notation $Tf_{A|B}(p)$ (see \cite{CF11} for further details and the forthcoming Example 5.3). 

Draisma and Kuttler proved in \cite{Draisma} that, if $p\in \Im(\phi_T)$ and $A|B$ is an edge split of $T$, then 
\begin{equation}\label{eq:rkTf}
\rk Tf_{A|B}(p)\leq \mathbf{m},    
\end{equation}
where $\rk$ indicates the collection of ranks of the diagonal subblocks.
}

{
\begin{ex}\rm
\label{ex_K80}
    Here we illustrate the definition of thin flattening with the cases studied in subsections \ref{sec:k81} and \ref{sec:k80} and using the notation of \cite[Example 5.3]{CF11} and \cite{CFM}.  We consider $n=4$, $L=\{1,2,3,4\}$, the bipartition $A|B$ given by $A=\{1,2\}$, $B=\{3,4\}$, and call $\bp_{x_1x_2x_3x_4}$ the Fourier coordinates of  $p \in \LL^G$ as in Section 3. For $G=K81$, there are precisely four irreducible representations of $G$ and all have dimension 1: we denote them by $N_{\omega_{\a}}, N_{\omega_{\c}}, N_{\omega_{\g}}$ and $N_{\omega_{\t}}$. %
      %
The explicit description of the isotypic components of $W\otimes W$ is given by $(W\otimes W)_{\a}=\langle \bar{\a}\otimes \bar{\a},\bar{\c}\otimes\bar{\c},\bar{\g}\otimes\bar{\g},\bar{\t}\otimes\bar{\t} \rangle$, $(W\otimes W)_{\c}=\langle \bar{\a}\otimes \bar{\c},\bar{\c}\otimes\bar{\a},\bar{\g}\otimes\bar{\t},\bar{\t}\otimes\bar{\g} \rangle$, $(W\otimes W)_{\g}=\langle \bar{\a}\otimes \bar{\g},\bar{\c}\otimes\bar{\t},\bar{\g}\otimes\bar{\a},\bar{\t}\otimes\bar{\c} \rangle$, $(W\otimes W)_{\t}=\langle \bar{\a}\otimes \bar{\t},\bar{\c}\otimes\bar{\g},\bar{\g}\otimes\bar{\c},\bar{\t}\otimes\bar{\a} \rangle$ (where subindices $x$ refer to representation $N_{\omega_{x}}$ ). The block-diagonal matrix $flat_{12|34}(\bar{p})$ of \eqref{flat_K3} is obtained by changing $flat_{12|34}(p)$ to this basis of $W\otimes W$. In this case we have $\mathcal{F}_{k}(W\otimes W)\cong \CC^4$, $k=\omega_{\a}$, $\omega_{\c}$, $\omega_{\g}$, $\omega_{\t}$, and since the irreducible representations have dimension one, the thin flattening matrix coincides with $flat_{12|34}(\bar{p})$.

    Now, we change the group and consider $G=K80$ as in subsection \ref{sec_quartetsK80}. 
    There are 5 irreducible representations of the group $K80$, four of dimension one $N_{\omega_1}$, $N_{\omega_2}$, $N_{\omega_3}$, $N_{\omega_4}$, and one of dimension two $N_{\omega}$ (see \cite[Ex 5.4]{CF11} for their description in terms of the character table of the group and for the justification on the claims in this example). 
    By using representation theory, we obtain the explicit decomposition of $W\otimes W$ into the isotypic components according to the new group: 
    \begin{eqnarray*}
        (W\otimes W)_{\omega_1} & = & \langle \bar{\a} \otimes \bar{\a},\bar{\g} \otimes \bar{\g},\bar{\c} \otimes \bar{\c}+\bar{\t} \otimes \bar{\t} \rangle, \quad \\
        (W\otimes W)_{\omega_2}& = & \langle \bar{\c} \otimes \bar{\t}-\bar{\t} \otimes \bar{\c}\rangle, \quad \\
        (W\otimes W)_{\omega_3}& = & \langle \bar{\a} \otimes \bar{\g},\bar{\g} \otimes \bar{\a},\bar{\c} \otimes \bar{\t}+\bar{\t} \otimes \bar{\c} \rangle, \quad  \\
        (W\otimes W)_{\omega_4}& = & \langle \bar{\c} \otimes \bar{\c}-\bar{\t} \otimes \bar{\t}\rangle, \quad \\
        (W\otimes W)_{\omega}& = & \langle \bar{\a} \otimes \bar{\c}, \bar{\c} \otimes \bar{\a}, \bar{\g} \otimes \bar{\t}, \bar{\t} \otimes \bar{\g}, \bar{\a} \otimes \bar{\t}, \bar{\t} \otimes \bar{\a}, \bar{\g} \otimes \bar{\c}, \bar{\c} \otimes \bar{\g} \rangle, \, \,  
    \end{eqnarray*}
 and   
 \begin{gather*}
     \F_{\omega_1} (W\otimes W)\cong \CC^3, \F_{\omega_2} (W\otimes W)\cong \CC , \F_{\omega_3} (W\otimes W)\cong \CC^3,  \F_{\omega_4} (W\otimes W)\cong \CC,\\
     \F_{\omega} (W\otimes W)\cong \CC^4.
 \end{gather*}

   By taking this basis of $W\otimes W$,
$flat_{A|B}(p)$ becomes a 5-block diagonal matrix with blocks $S_1\in M_{{3\times 3}}{(\CC)}$, $S_2\in M_{{1\times 1}}{(\CC)}$, $S_3\in M_{{3\times 3}}{(\CC)}$, $S_4\in M_{{1\times 1}}{(\CC)}$ and $S\in M_{{8\times 8}}{(\CC)}$. By comparing with the isotypic components obtained for $K81$, we realize that
    \begin{eqnarray*}
        (W\otimes W)_{\a} & =& (W\otimes W)_{\omega_1} \oplus (W\otimes W)_{\omega_4}, \\
        (W\otimes W)_{\g} & =& (W\otimes W)_{\omega_2} \oplus (W\otimes W)_{\omega_3}.
    \end{eqnarray*}
    The first decomposition implies that the block $B_{\a}$ of (\ref{flat_K3}) becomes in turn a 2-block diagonal matrix, with blocks $S_1$ and $S_4$. Similarly, the second decomposition implies that $B_{\g}$ also becomes a 2-block diagonal matrix, with blocks $S_2$ and $S_3$. This is analogous to imposing the linear equations of the first three rows of Remark \ref{modeq:K80}. 
    %
%

    On the other hand, the 16 linear remaining equations of Remark \ref{modeq:K80} (or equivalently, the redundancy between the blocks $B_{\c}$ and $B_{\t}$) are translated into the fact that the $8\times 8$ matrix $S$ corresponds to  $\Hom_G((W\otimes W)_{\omega},(W\otimes W)_{\omega})$, which the thin flattening \eqref{aux:Schur} maps to $\Hom_{\CC}(\F_{\omega}^A,\F_{\omega}^B)$. This is, by restricting to tensors in $K80$, $S$ can be reduced to a  $4\times 4$ matrix $S'$ (analogously, to one of the two copies $B_{\c}$ or $B_{\t}$). The thin flattening in this case is the block-diagonal matrix \[Tf_{A|B}(p)=\ddd{S_1,S_2,S_3,S_4,S'}\]
    and \eqref{eq:rkTf} gives  $\rk Tf_{A|B}(p^T)\leq \mathbf{m}=(1,0,1,0,1)$ if $A|B$ is an edge split of $T$ ($W=N_{\omega_1}\oplus N_{\omega_1} \oplus N_{\omega}$ in this case). For instance, the conditions $\rank(S_2)\leq 0$ and $\rank(S_2)\leq 0$ are equivalent to the two linear topology invariants found in Subsection \ref{sec_quartetsK80}.
 \end{ex}
}


\begin{rmk}\label{rmk_ones}
 Note that if $k\leq l$, both $\F_k^A$ and $\F_k^B$ are non-zero as they contain $H_k^A:= \textbf{1}\otimes  \ldots\otimes\textbf{1}\otimes \F_k$ and $H_k^B:= \textbf{1}\otimes \ldots\otimes\textbf{1}\otimes \F_k$, respectively, where $\textbf{1}=\sum_{i\in \Sigma} i$. In particular, $\dim \F_k^A$ and $\dim \F_k^B$ are strictly positive for $k\leq l$.

 These subspaces play a special role due to the following reason. In terms of coordinate rings, the map $f_l$ introduced in Section 4 can be easily described (as explained in \cite[\S 4]{CFM}).  Indeed, if $l=l_1$ (to simplify notation) the dual of the marginalization map $f_l$  is
 \begin{equation}\label{eq_dualmarg}
     \begin{array}{rcl}
          f_ l^*: \otimes_{u\neq l} W_u&
          \longrightarrow &\otimes_{u\in L} W_u\\
     t & \mapsto & \textbf{1} \otimes t
     \end{array}.
 \end{equation}
 Note that this map is basis independent and restricts to $G$-invariant tensors.
\end{rmk}

The main result of this section is the following. 

\begin{thm}\label{thm_edges}
	Let $A|B=\{i_1,\ldots,i_a\} \mid \{j_1,\ldots,j_b\}$ be a bipartition of $L$. Let $p\in \LL^G_n$ be a tensor
	such that the marginalization
	\begin{eqnarray*}
 p'=(f_{i_1}\circ \cdots \circ f_{i_{a-1}} \circ f_{j_1} \circ \ldots \circ f_{j_{b-1}})(p)\in (W_{i_a}\otimes W_{j_b})^G
	\end{eqnarray*}
	has maximal rank as a homomorphism in
 $\Hom_G(W_{i_a},W_{j_b})\cong \bigoplus_k \Hom_{\CC} (\F_{k},\F_{k})$
	(that is, it has rank $\mathbf{m}$).
	Then, \[\rank flat_{A|B}(p)\leq \kappa \textrm{ if and only if } \rk Tf_{A|B}(p)\leq \mathbf{m}.\]
\end{thm}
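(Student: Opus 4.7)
The strategy is to translate both rank conditions into parallel statements about the Maschke/Schur block decomposition of $p$ viewed as a $G$-equivariant homomorphism, and then to use the factorization of $p'$ through $p$ to pin the block ranks down from below.

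Viewing $p\in \LL^G$ as a $G$-equivariant map $\tilde p:W_A^*\to W_B$ (via the standard-basis identification $W\cong W^*$), one has $\rank flatt_{A|B}(p)=\rank \tilde p$. The isomorphism \eqref{aux:Schur} then writes $\tilde p=\bigoplus_{k=1}^s f_k\otimes \mathrm{Id}_{N_k}$, where $f_k:\F_k^A\to \F_k^B$ are exactly the blocks of $Tf_{A|B}(p)$. This yields the two key identities
\[\rank flatt_{A|B}(p)=\sum_{k=1}^s d_k\,\rk f_k,\qquad \kappa=\dim W=\sum_{k=1}^s d_k\, m_k.\]
The implication $(\Leftarrow)$ is then immediate: summing $\rk f_k\le m_k$ weighted by $d_k$ gives $\rank flatt_{A|B}(p)\le\kappa$.

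For the reverse implication I would bring in the hypothesis on $p'$. The marginalizations over the leaves in $B\setminus\{j_b\}$ assemble into a $G$-equivariant map $\pi_B:W_B\to W_{j_b}$, while dually (using \eqref{eq_dualmarg}) the marginalizations over $A\setminus\{i_a\}$ correspond to the $G$-equivariant inclusion $\iota_A:W_{i_a}^*\to W_A^*$, $v\mapsto \mathbf{1}^*\otimes\cdots\otimes \mathbf{1}^*\otimes v$. By construction $\tilde{p'}=\pi_B\circ \tilde p\circ \iota_A$, and since the Schur decomposition is functorial with respect to composition of $G$-equivariant maps, its $k$-th component satisfies
\[p'_k=\pi_B^{(k)}\circ f_k\circ \iota_A^{(k)}:\F_k\longrightarrow \F_k^A\longrightarrow \F_k^B\longrightarrow \F_k.\]
The hypothesis that $p'$ has maximal rank $\mathbf{m}$ says that each $p'_k$ has rank $m_k=\dim \F_k$, hence is an isomorphism of $\F_k$; this forces $\rk f_k\ge m_k$. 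Combining this lower bound with the upper bound $\sum_k d_k\rk f_k\le\kappa=\sum_k d_k m_k$ leaves no slack and yields $\rk f_k=m_k$ for every $k$, and in particular $\rk Tf_{A|B}(p)\le \mathbf m$.

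The step that I expect to require the most care is the compatibility of the Schur isomorphism \eqref{aux:Schur} with composition of $G$-equivariant maps, together with the identification of an iterated dual marginalization with tensoring against $\mathbf{1}^*$ in the appropriate slots; both are essentially Schur's lemma plus \eqref{eq_dualmarg}, but must be verified to ensure that $p'_k$ really does factor through the block $f_k$. Once those bookkeeping points are in place, the whole argument reduces to the elementary observation that a family of nonnegative integers $r_k$ satisfying $r_k\ge m_k$ and $\sum_k d_k r_k\le\sum_k d_k m_k$ must in fact satisfy $r_k=m_k$ for every $k$.
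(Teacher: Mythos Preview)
Your proposal is correct and follows essentially the same route as the paper: both obtain the identity $\rank flatt_{A|B}(p)=\sum_k d_k\,\rk f_k$ from the isotypic decomposition, use the marginalization hypothesis to force $\rk f_k\ge m_k$, and then squeeze against $\sum_k d_k\,\rk f_k\le \kappa=\sum_k d_k m_k$. Your factorization $p'_k=\pi_B^{(k)}\circ f_k\circ \iota_A^{(k)}$ is exactly the paper's ``restrict $h_k$ to $H_k^A=\mathbf{1}\otimes\cdots\otimes\mathbf{1}\otimes\F_k$ and project to $H_k^B$'' stated in compositional language, so the bookkeeping concern you flag is precisely what the paper handles via Remark~\ref{rmk_ones} and \eqref{eq_dualmarg}.
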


\begin{proof} It is immediate to prove that if $\rk Tf_{A|B}(p)\leq \mathbf{m}$, then $\rank flat_{A|B}(p)\leq \kappa$. We proceed to prove the converse.

We have an isomorphism
\begin{eqnarray*}
flat_{A|B}: \LL \rightarrow \Hom_{\CC}(W_A,W_B)
\end{eqnarray*}
that maps $p$ to its flattening $flat_{A|B}(p)$. On the other hand,
$p$ belongs to $\LL^G$, which by (\ref{aux:Schur}) is isomorphic to $\bigoplus_{k=1}^s \Hom_{\CC}(\F_k^A,\F_k^B)$
via the map $Tf_{A|B}$.
%
The connection between both maps is well described by
the following commutative diagram:
\[\xymatrix{ \LL \quad \ar[r]^-{flat_{A|B}} & \quad \Hom_{\CC}(W_A,W_B) 
\; = &
 \hspace{-10mm}\bigoplus_{x,y}\Hom_{\CC}(\F_x^A,\F_y^B)\otimes \Hom_{\CC}(N_x,N_y) \\
  \LL^G  \quad
 \ar@{^{(}->}[u]
 \ar[rr]^-{Tf_{A|B}} & \quad & \bigoplus_{k=1}^s
 \Hom_{\CC}(\F_k^A,\F_k^B)
 \ar@{^{(}->}[u]^{\theta}
}\]

where horizontal arrows correspond to flattening and thin flattening, respectively. Vertical arrows correspond to the natural inclusion (left) and the natural injection $\theta$ that can be described as follows.
%
Fix $k=1,\ldots,s$, and let $h_k \in \Hom_{\CC} (\F_k^A, \F_k^B)$, which naturally corresponds to $h_k \otimes \mathrm{Id} \in \Hom_{\CC} (\F_k^A, \F_k^B) \otimes \Hom_G(N_k,N_k)$.
This space is isomorphic to $\Hom_{\CC} (\F_k^A\otimes N_k,\F_k^B\otimes N_k),$ which is naturally immersed as a subspace in the arrival space of $\theta$,
$$\bigoplus_{x,y}\Hom_{\CC}%
(\F_x^A,\F_y^B))\otimes \Hom_{\CC}(N_x,N_y).
$$
Note that the rank of $h_k \otimes \mathrm{Id} \in \Hom_{\CC} (\F_k^A\otimes N_k,\F_k^B\otimes N_k)$ is equal to $d_k \, \rank h_k$.

Now if $Tf_{A|B}(p) = (h_1,\dots,h_s)$, according to this commutative diagram we have
\[\rank flat_{A|B}(p) = d_1 \,\rank h_1 + \dots + d_s \,\rank h_s,\]which, by assumption, is smaller than
$\rank flat(p) \leq \kappa=m_1\,d_1+\cdots+m_l\,d_l$. We conclude that
\begin{equation}\label{ineq}
 \sum_{k=1}^l d_k \rank h_k  \leq d_1\,\rank h_1+\cdots +d_s\, \rank h_s\leq m_1\,d_1+\cdots+m_l\,d_l.
\end{equation}
On the other hand, the hypothesis of Theorem \ref{thm_edges} gives that $\rank h_k \geq m_k$ for all $1\leq k\leq l$. 
Indeed, in the notation of Remark \ref {rmk_ones}, restricting $h_k$ to $H_k^A\subset \F_k^A$ and projecting to  $H_k^B$ corresponds to the $k$-th component of the tensor $p'\in \oplus_k\Hom_{\CC} (\F_{k},\F_{k})$ in the statement, which has rank $m_k$ by hypothesis.
Inequalities \eqref{ineq}  force $\rank h_k=m_i$ for every $1\leq k\leq l$ and $\rank h_k=0$ for $k>l$. 
\begin{eqnarray*}
 \rk Tf_{A|B}(p) = (\rank h_1,\ldots,\rank h_s) \leq (m_1,\ldots,m_l,0,\dots, 0) =\mathbf{m}.
\end{eqnarray*}
\end{proof}



\begin{ex}\label{sec5_ExK81}\rm
Here we illustrate the hypotheses of the above theorem with the case $G=K81$ studied in Example \ref{ex_K80}.
For the permutation representation $W$ we have $\mathcal{F}_{k}\cong \CC$ for $k\in Y=\{\omega_\a,\omega_\c,\omega_\g,\omega_\t\}$ and $W= (\mathcal{F}_{\omega_\a}\otimes N_{\omega_\a})\oplus (\mathcal{F}_{\omega_\c}\otimes N_{\omega_\c}) \oplus (\mathcal{F}_{\omega_\g}\otimes N_{\omega_\g}) \oplus (\mathcal{F}_{\omega_\t}\otimes N_{\omega_\t}) $ and hence $\mathbf{m}= (1,1,1,1)$. 


Consider $A=\{1,2\}$, $B=\{3,4\}$ so that the marginalization in the hypotheses of the theorem is over leaves 1 and 3:
$p'_{x,y}=\sum_{i_1,i_3}p_{i_1 x i_3 y}.$ By \eqref{eq_dualmarg}, the dual of this marginalization map sends any $v_2\otimes v_4$ to $\ba\otimes v_2\otimes \ba\otimes v_4$ (because  $\textbf{1}=\ba$) and this description is basis independent. Thus, translated into Fourier coordinates, this marginalization map  is:
\[ \begin{array}{rcl}
     \LL^G & \longrightarrow & \Hom_G(W,W)\cong \bigoplus_{k\in Y}
     \Hom(\mathcal{F}_{k},\mathcal{F}_{k})\\
     \bp & \mapsto& 
     \begin{pmatrix}
         \bp_{\texttt{AAAA}}&0&0&0\\
         0& \bp_{\texttt{ACAC}}&0&0\\
         0&0&\bp_{\texttt{AGAG}}&0\\
         0&0&0&\bp_{\texttt{ATAT}}
     \end{pmatrix}
\end{array}.
\]
The hypothesis of Theorem \ref{thm_edges} requires this block diagonal matrix to have maximal rank, which is equivalent to the condition
$\bar{p}_{\texttt{AAAA}}\neq 0$, $\bar{p}_{\texttt{ACAC}}\neq 0$, $\bar{p}_{\texttt{AGAG}}\neq 0$, $\bar{p}_{\texttt{ATAT}}\neq 0$ that we gave in Section \ref{sec:k81}, see \eqref{eq:O}.
\end{ex}

Below we write some consequences of Theorem \ref{thm_edges}. 
{As introduced in section 4 for Theorem \ref{thm_main}, points of no evolution are the image by $\phi_T$ of a Markov process on a tree whose transition matrices are equal to the identity. In biology,  as mutations are generally considered rare events in nature, transition matrices are expected to be close to the identity matrix and a hidden Markov process on tree with biologically meaningful parameters gives rise to a distribution close to a point of no evolution. 
}

\begin{corollary}
 Let $A|B=\{i_1,\ldots,i_a\} \mid \{j_1,\ldots,j_b\}$ be a bipartition of $L$. 
There exists a non-empty  Zariski open set $\mathcal{O}$ of $\LL^G_n$ such that if $p\in \mathcal{O}$ then
\begin{eqnarray*}
\rank flat_{A|B}(p)\leq \kappa \quad \Leftrightarrow \quad \rk Tf_{A|B}(p)\leq \mathbf{m}.
\end{eqnarray*}
Moreover, $\mathcal{O}$ contains all points of no evolution $p\in \LL^G$ such that ${p_{i\dots i}\neq 0}, i\in \Sigma$.
\end{corollary}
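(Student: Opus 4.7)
The plan is to take $\mathcal{O}$ to be the Zariski open subset of $\LL^G_n$ cut out by the hypothesis of Theorem \ref{thm_edges}: the locus of $p$ for which the iterated marginalization
\[p' = (f_{i_1}\circ \cdots \circ f_{i_{a-1}} \circ f_{j_1} \circ \cdots \circ f_{j_{b-1}})(p) \in (W_{i_a}\otimes W_{j_b})^G\]
has maximal rank $\mathbf{m}$ under the isomorphism $(W_{i_a}\otimes W_{j_b})^G\cong \oplus_k \Hom_\CC(\F_k,\F_k)$ given by $Tf_{i_a|j_b}$. Openness is clear since ``$\rank h_k \geq m_k$'' on each summand is the non-vanishing of at least one $m_k\times m_k$ minor, and $p\mapsto p'$ is linear. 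With this choice of $\mathcal{O}$, the claimed equivalence for $p\in \mathcal{O}$ is then an immediate application of Theorem \ref{thm_edges}.

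The substantive step is to prove $\mathcal{O}\neq \emptyset$, which I would obtain by establishing the stronger ``Moreover'' statement directly: every no-evolution tensor $p = \sum_{x\in \Sigma} p_x\, x\otimes\cdots\otimes x \in \LL^G$ with $p_x\neq 0$ for all $x$ lies in $\mathcal{O}$. First I would compute $p'$ on such a tensor: since $\mathbf{1}\cdot x = 1$ for any standard basis vector $x$, each marginalization $f_l$ sends $x^{\otimes r}$ to $x^{\otimes (r-1)}$, so iterating the $a+b-2$ marginalizations preserves the no-evolution form and yields $p' = \sum_{x\in \Sigma} p_x\, x\otimes x$. Its ordinary flattening in $\Hom_\CC(W_{i_a}, W_{j_b})$ is the diagonal matrix $\mathrm{diag}(p_x)_{x\in \Sigma}$, which has full rank $\kappa$ under our hypothesis.

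The last step is to upgrade full rank $\kappa$ of the flattening to maximal thin rank $\mathbf{m}$. For this I would invoke the identity established inside the proof of Theorem \ref{thm_edges},
\[\rank \, flatt(p') = \sum_{k=1}^s d_k\,\rank h_k,\]
where $Tf_{i_a|j_b}(p')=(h_1,\ldots,h_s)$. Combined with the upper bound $\rank h_k \leq m_k$ and with $\sum_{k=1}^l d_k\, m_k = \kappa$, the equality $\sum_k d_k\rank h_k=\kappa$ forces $\rank h_k = m_k$ for every $k$, as desired. I do not anticipate any serious obstacle: the whole argument is a short unpacking of Theorem \ref{thm_edges} once the right open set and test points are chosen, with the only mildly delicate point being the verification that iterated marginalization preserves the no-evolution shape, which follows from $\mathbf{1}\cdot x = 1$.
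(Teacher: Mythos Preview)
Your proposal is correct and follows the same overall approach as the paper: take $\mathcal{O}$ to be the locus where the hypothesis of Theorem \ref{thm_edges} holds, so the equivalence is immediate, and then verify that no-evolution points with all $p_x\neq 0$ lie in $\mathcal{O}$.

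The only difference is in how that last verification is handled. The paper simply cites \cite[Lemma 5.5]{CFM} for the fact that such no-evolution points satisfy the maximal-rank hypothesis on $p'$. You instead give a direct self-contained argument: compute $p'=\sum_x p_x\,x\otimes x$ explicitly, observe its ordinary flattening is $\mathrm{diag}(p_x)$ of full rank $\kappa$, and then use the rank identity $\rank flatt = \sum_k d_k\,\rank h_k$ together with $\rank h_k\le m_k$ and $\kappa=\sum_k d_k m_k$ to force $\rank h_k=m_k$ for all $k$. This is a clean and elementary substitute for the external citation; the one small thing to make explicit is that the rank identity from the proof of Theorem \ref{thm_edges} is being applied here in the special case $W_A=W_{i_a}$, $W_B=W_{j_b}$, which is of course covered by the same Schur's lemma reasoning.
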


\begin{proof}
By  \cite[Lemma 5.5]{CFM}, all generic points of no evolution $p\in \LL^G$ with ${p_{i\dots i}\neq 0}$, $\forall \, i\in \Sigma$, satisfy the hypothesis of the previous theorem {since the marginalization of such points over all leaves except two has maximal rank.} Thus the hypothesis is still satisfied on a Zariski open subset containing $p$ and we are done.
\end{proof}

\begin{rmk}
The statement of the previous corollary also holds if we replace the thin flattening matrix $Tf_{A|B}(p)$ by a full-dimension block-diagonal flattening matrix, that is, the matrix obtained from $p\in  \LL^G = (W_A \otimes W_B)^G$ when rows and columns are indexed by basis of $W_A$ and $W_B$ consistent with the Maschke decomposition into isotypic components. {In this case, we have to replace $\mathbf{m}$ by $(d_1\,m_1,\dots,d_s\,m_s)$, see \eqref{ineq}.}
\end{rmk}

\begin{rmk}\rm As a consequence of Theorem \ref{thm_edges} we obtain that for  generic tensors  $p\in \LL^G$ in the image $\phi_T^G$ for a certain phylogenetic tree $T$ and group $G$, the rank conditions for the general Markov model (that is, $\rank flat_{A|B}(p)\leq \kappa$ for every edge split $A|B$ of $T$) are enough to {identify the variety $V_T$.}
Indeed, as proven in \cite{CF11}, for generic tensors $p$ in the union $X=\cup_T V_T$ over all trees with leaf set $L$, the rank conditions on the thin flattening are enough to {define} the variety $V_T$ to which $p$ belongs {inside $X$.} Now by Theorem \ref{thm_edges} we can translate these rank conditions into the easier condition of rank $\leq \kappa$, which can be directly tested in practice using the Eckart-Young Theorem \cite{Eckart1936} applied to the usual flattening matrix without dealing with the block structure or the irreducible representations of the group (so the rank conditions on the thin flattening do not add new information).
\end{rmk}

{On the other hand, Theorem \ref{thm_edges} also serves for obtaining explicit equations of $V_T^G$. This idea was hinted in section 3, but we formalize it here. Let $I_{A|B}$ be the ideal generated by all $(\kappa+1)\times (\kappa+1)$ minors of $flat_{A|B}(p)$, and let $J$ denote the intersection of the ideals of $m_k\times m_k$ minors of $flat_{i_a|j_b}(p')$, $k=1,\dots,s$ (using the notation of Theorem \ref{thm_edges}). Theorem \ref{thm_edges} implies the following result.}

{\begin{corollary}\label{cor:saturation}
    The saturation $(I_{A|B}+I(\LL^G)):J^{\infty}$ is contained in $I_T^G.$
\end{corollary} 
\begin{proof}
    Let $g$ be a polynomial in $\mathfrak{a}=(I_{A|B}+I(\LL^G)):J^{\infty}$. As $V_T^G$ is irreducible, it suffices to prove that $g(p)$ vanishes for any $p\in V_T^G\setminus V(J)$ to deduce that $g\in I_T^G$. Let $Z$ be the variety defined by $(m_{k}+1)\times (m_{k}+1)$ minors of $Tf_{A|B}(p)$, $k=1,\dots,s$. Note that $ V_T^G\setminus V(J)$ is contained in $Z\setminus V(J)$ and Theorem \ref{thm_edges} implies that this set coincides with $\left(V(I_{A|B})\cap \LL^G\right)\setminus V(J)$. Thus, if $p\in V_T^G\setminus V(J)$, $p$ also belongs to  $\left(V(I_{A|B})\cap \LL^G\right)\setminus V(J)$. Then $g(p)$ is zero because $g$ vanishes on the closure of this set (by properties of quotient ideals, the zero set of $\mathfrak{a}$ is the closure of $\left(V(I_{A|B})\cap \LL^G\right)\setminus V(J)$, see \cite{Cox1997}).
\end{proof} 
}

{For instance, in Example \ref{sec5_ExK81} we have $J=(\bp_{\texttt{AAAA}}\bp_{\texttt{ACAC}}\bp_{\texttt{AGAG}}\bp_{\texttt{ATAT}})$ and $\LL \setminus V(J)$ is the open set $\mathcal{O}$ of Section \ref{sec:k81}. Working with the quotient ideal of Corollary \ref{cor:saturation} is equivalent to restricting to $\mathcal{O}$ in terms of algebraic varieties.}

\subsection{Phylogenetic networks}

In this subsection we apply the previous results in the more general setting of \emph{binary phylogenetic networks} \cite[\S 10]{SteelPhylogeny},
that is, rooted acyclic directed graphs $\N$ (with no edges in parallel) satisfying: 1) the root $r$ has out-degree two, 2) every leaf has in-degree one, and 3) all other nodes have either in-degree one and out-degree two (these are called \emph{tree nodes}) or in-degree two and out-degree one (called \emph{reticulation nodes}). 

Following \cite{grosslong} and \cite{nakhleh2011}, we briefly recall the description of  Markov processes on phylogenetic networks and the corresponding notation.
A phylogenetic network is a tree-child network $\N$ whose set of leaves is in bijection with a finite set $L$. To model substitution of molecular units along a phylogenetic network one assigns a discrete random variable taking values in $\Sigma$ to each node on $\mathcal{N}$, then distribution $\pi $ is assigned to the root $r$, and each edge $e$ is assigned $\kappa\times \kappa$-transition matrix ${M^e}$ (both taken from the evolutionary model). Write $R = \{w_1, \dots , w_m\}$ for the set of reticulation nodes of $\N$, and denote by $e^0_i$ and $e^1_i$ the two edges directed into $w_i$. For $1\leq i\leq m$ assign a parameter $\delta_i\in (0,1)$ to $e^0_i$ and $1-\delta_i$ to $e^1_i$ so that with probability $\delta_i$ edge $e^0_i$ is removed and $e^1_i$ is kept (and with probability $1-\delta_i$ $e^0_i$ is kept and $e^1_i$ removed).
We write $\theta$ for the whole set of these substitution parameters.
%
%
Each binary vector $\sigma \in  \{0, 1\}^m$ encodes the possible choices for the reticulation edges, where $\sigma_i=0$ or $1$ means that the edge $e_i^0$ or $e_i^1$ is removed, respectively. Thus, each $\sigma \in  \{0, 1\}^m$ results in an $n$-leaf tree $T_{\sigma}$ (displayed by $\mathcal{N}$) rooted at $r$ with a collection of transition matrices corresponding to the particular edges that remain according to $\sigma$.
We call $\theta_{\sigma}$ the restriction of the substitution parameters $\theta$ of the network to $T_{\sigma}$.

{The \emph{displayed tree model} presents the distribution on the set of site-patterns  $\Sigma^n$  (or assignment of states at the leaves of $\mathcal{N}$)
as a mixture of distributions on the displayed trees of $\mathcal{N}$ as follows:}
\begin{eqnarray*}
P_{\N,\theta} = \sum_{\sigma\in \{0,1\}^m} \left (
\prod_{i=1}^m\delta_i^{1-\sigma_i} (1-\delta_i)^{\sigma_i}
\right ) \phi_{T_\sigma}(\theta_{\sigma})
\end{eqnarray*}

One can define it analogously if all parameters are taken from a $G$-equivariant model.

Assume that $N$ has a clade $T_A$, $A\subset L$ 
that does not contain any reticulation node (this is illustrated in the network of Figure \ref{figure:network}, where the clade $T_A$ corresponds to leaves
1 and 2). Then $T_A$ is a subtree of $\N$ shared by all $T_{\sigma}$ and the transition matrices at the edges of $T_A$ are also shared by all $\theta_{\sigma}$. Write $B$ for the leaves in $\N$ not in $A$.

Theorem 2 of \cite{CasFerBirkhauser} together with the results of section 4 give:


\begin{figure}
\begin{center}
 \includegraphics[scale=0.4]{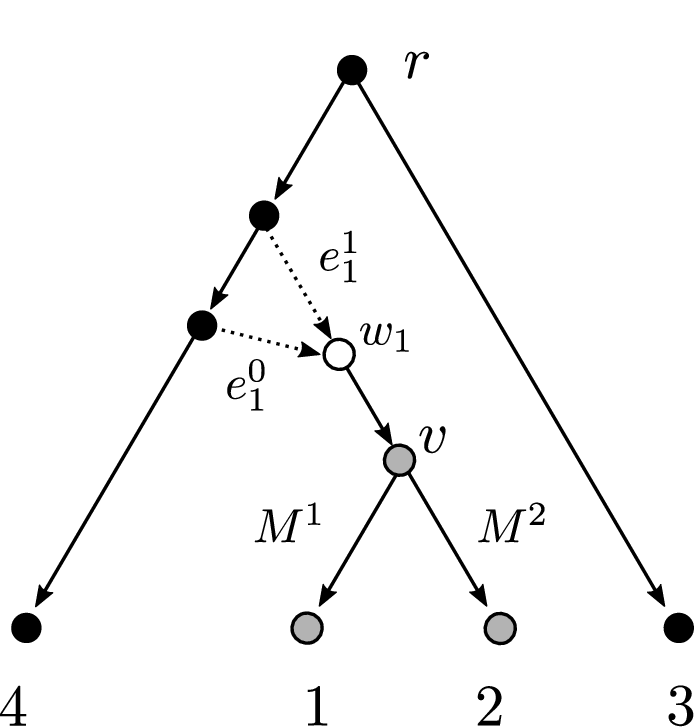}
 \caption{\label{figure:network} A 4-leaf phylogenetic network $\N$ with one reticulation node $w_1$ painted white. The clade corresponding to leaves $A = \{1, 2\}$ has been coloured with gray.}
 \end{center}
\end{figure}

\begin{thm}(\cite{CasFerBirkhauser})\label{thm_clade}
Consider a $G$-equivariant model on a phylogenetic network $\N$.
Assume that there is a clade $T_A$ in $\N$ that does not contain any reticulation node
and write $B=L\setminus A$.
If $p=P_{\N,\theta}$ is a distribution on $\N$, then
the block-rank of $Tf_{A|B}(p)$
is smaller than or equal to $\mathbf{m}$.
\end{thm}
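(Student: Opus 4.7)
The starting point is the observation that, since the clade $T_A$ contains no reticulation vertex, the subtree $T_A$ appears unchanged in every resolved tree $T_\sigma$, and the transition matrices along its edges are common to all parameter restrictions $\theta_\sigma$. Therefore, conditioning on the state at the root $u_A$ of the clade, the conditional distribution of the states at leaves in $A$ depends only on $T_A$ and not on $\sigma$. Denoting it $P_A(x_A\mid i)$, and gathering into
\[
Q_B(x_B\mid i)=\sum_\sigma \Big(\prod_{j=1}^m \delta_j^{1-\sigma_j}(1-\delta_j)^{\sigma_j}\Big)\, \pi^\sigma_{u_A}(i)\, P_B^\sigma(x_B\mid i)
\]
all contributions from the root distribution, the reticulation weights, and the transition matrices along the edges outside $T_A$, one obtains the rank-$\kappa$ factorization
\[
P_{\N,\theta}(x_A,x_B) \;=\; \sum_{i\in \Sigma} P_A(x_A\mid i)\, Q_B(x_B\mid i).
\]
This is (essentially) the content of Theorem 2 of \cite{CasFerBirkhauser} and immediately implies that $flatt_{A\mid B}(P_{\N,\theta})$ has rank at most $\kappa$.

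Next I would verify that both $P_A(\cdot\mid\cdot)$ and $Q_B(\cdot\mid\cdot)$ are $G$-equivariant when viewed as homomorphisms from $W_{u_A}$ to $W_A$ and $W_B$ respectively. For $P_A$ this is immediate, as each transition matrix along $T_A$ belongs to $\MM^G$ and $\MM^G$ is closed under multiplication. For $Q_B$, I would use Remark \ref{change_root_equivariance} to ensure that the induced distribution at $u_A$ in each $T_\sigma$ remains $G$-invariant, and then check that $G$-invariance of each $\pi^\sigma_{u_A}$ together with $G$-equivariance of each $P^\sigma_B$ (another product of matrices in $\MM^G$) propagates through the convex combination, so $Q_B(gx_B\mid gi)=Q_B(x_B\mid i)$ for every $g\in G$.

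Finally, I would translate the factorization through $W_{u_A}$ into a block-by-block bound on the thin flattening. Using the isotypic decomposition $W_{u_A}\cong\bigoplus_k \F_k\otimes N_k$, Schur's lemma ensures that every $G$-equivariant map out of $W_{u_A}$ splits as $\bigoplus_k f_k\otimes \mathrm{Id}_{N_k}$ with $f_k$ defined on $\F_k$. Hence, via the isomorphism \eqref{aux:Schur}, the $k$-th component of $Tf_{A\mid B}(P_{\N,\theta})\in\bigoplus_k \Hom_\CC(\F_k^A,\F_k^B)$ factors through $\F_k$, which has dimension $m_k$. Therefore the $k$-th block has rank at most $m_k$, i.e.\ $\rk Tf_{A\mid B}(P_{\N,\theta})\leq \mathbf{m}$.

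The main technical obstacle, in my view, is the bookkeeping of the last step: making precise how the rank-$\kappa$ factorization through $W_{u_A}$ (a single $\kappa$-dimensional space) yields the sharper block-rank bound $\mathbf{m}$ via the isotypic decomposition, rather than just the coarser bound $\rank flatt_{A\mid B}(P_{\N,\theta})\leq \kappa$. An alternative pathway is to invoke Theorem \ref{thm_edges} directly, combining the flattening bound with the $G$-invariance of $P_{\N,\theta}$ (guaranteed because all the parameters are $G$-equivariant); this route however requires verifying the maximal-rank marginalization hypothesis of Theorem \ref{thm_edges}, which would need either a Zariski-density argument of the sort used in the proof of Theorem \ref{thm_main} or a direct check for tensors coming from a network.
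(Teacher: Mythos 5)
Your proposal is correct, but your main argument follows a genuinely different route from the paper's. The paper's proof is a two-line reduction: it cites Theorem 2 of \cite{CasFerBirkhauser} for the bound $\rank flatt_{A|B}(p)\leq \kappa$ (valid already for the general Markov model, hence for the equivariant submodel) and then invokes Theorem \ref{thm_edges} to convert that into the block-rank bound $\rk Tf_{A|B}(p)\leq \mathbf{m}$. Your main argument instead proves the block-rank bound directly: since the clade $T_A$ is free of reticulation vertices, the flattening factors through the single copy of $W$ at the clade root $u_A$ as a composition of two $G$-equivariant maps (the clade side being common to all $T_\sigma$, the other factor being a convex combination of $G$-equivariant contributions), and Schur's lemma then forces the $k$-th block to factor through $\F_k$, giving $\rank h_k\leq m_k$. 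What your route buys is that it is \emph{unconditional}: the implication $\rank flatt_{A|B}(p)\leq \kappa \Rightarrow \rk Tf_{A|B}(p)\leq \mathbf{m}$ in Theorem \ref{thm_edges} is only proved under the hypothesis that the marginalization $p'$ has maximal rank $\mathbf{m}$, and a given distribution $P_{\N,\theta}$ need not satisfy this; your closing remark identifies exactly this issue with the alternative (i.e.\ the paper's) pathway. So your direct factorization argument is, if anything, more complete than the proof given in the paper, at the cost of reproving for networks the Schur-lemma bookkeeping that the paper delegates to Theorem \ref{thm_edges}.
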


{An analogous result can be obtained} if we consider the full-dimension block-diagonal flattening matrix instead of the thin flattening.

\begin{proof}
For the general Markov model this was proved in \cite{CasFerBirkhauser}. The proof for equivariant models follows from the general Markov model and Theorem \ref{thm_edges}.
\end{proof}

 {It seems plausible to generalize this theorem to other type of splits in phylogenetic networks (avoiding tree clades), but doing this in this paper would require introducing many notation for networks and we plan to address this in the forthcoming work \cite{CFGHS}.}
 
\section{Discussion and open questions}
Given a phylogenetic tree $T$ and a permutation group $G\leq \mathfrak{S}_{\kappa}$, we have investigated the connection between the algebraic variety associated to a  $G$-equivariant model on $T$ and the variety associated to the the general Markov model on the same tree. We have given a negative answer to Question 1 but
we have proved in Theorem \ref{thm_main} that $V_T^G$ is an irreducible component of $V_T\cap \LL^G$ for any trivalent tree $T$ and any group $G$.
As a consequence of the results, we have also seen that systems of phylogenetic invariants specific for $G$-equivariant models arise from rank $\kappa$ constraints applied to flattening matrices (if we take into account the $G$-invariance of the corresponding distributions).
This is true not only for trees but also for certain networks as shown in Section 5 (Theorem \ref{thm_clade}).

These theoretical results have a practical consequence: they imply that one can implement phylogenetic reconstruction methods based on phylogenetic invariants without the need of performing isotypical decompositions and based solely on the phylogenetic invariants of the general Markov model. For example, this implies that for the models K81, K80 and JC69 there is no need to apply a discrete Fourier transform on the data prior to applying algebraic methods.

In relation to Question 1 and motivated by the examples and results of section 3, we pose the following questions:
\begin{enumerate}
    \item For which trees and models does $V_T\cap \LL^G$ coincide with $V_T^G$? In other words, in which cases is this intersection an irreducible variety?
    \item In which cases is $I(V_T)+I(\LL^G)=I(V_T^G)$?
%
\end{enumerate}
In view of the examples of section 3, we conjecture that $V_T\cap \LL^G$ coincides with $V_T$ only when $T$ is a star tree; similarly we believe that for star trees it is natural to expect $I(V_T)+I(\LL^G)=I(V_T^G)$.

It is also natural to ask whether  Theorem \ref{thm_main} can be generalized to non-trivalent trees. If one wants to prove it with the same kind of arguments we used, then it would be enough to prove that generic points of no evolution are smooth points of $V_T$.

From a more practical point of view, in \cite{CFM} we provided equations for complete intersections that defined the varieties $V_T^G\subset \LL^G$ on certain open subsets containing the biologically relevant points. These equations were obtained by extending some equations from tripods and considering certain minors of the flattening matrices. From the work done here it is natural to expect that this procedure can be done by intersecting the complete intersection given for the general Markov model, with the corresponding space $\LL^G$. As observed in the examples of section 3, one has to take into account the decrease in the degree of equations obtained when imposing $G$-invariance to equations from the general Markov model.

Regarding Theorem \ref{thm_edges} and its implications for phylogenetic networks (Theorem \ref{thm_clade}), in a forthcoming work  \cite{CFGHS} we will explore the consequences on the identifiability of phylogenetic networks evolving under equivariant models.

\bibliographystyle{plain}

\newpage
\appendix
\section{Macaulay2 computations for section 3}
Here the notation of Small Phylogenetic Trees webpage \cite{Smalltrees} is adopted and ideals of tripod trees are obtained from this webpage.
\subsection{From K81 to K80}\label{sec:app1}
\subsubsection{Tripods}

The follwing M2 code is also available at \url{https://github.com/mcasanellas/Phyloinvariants}

\begin{verbatim}
R = QQ[q1,q2,q3,q4,q5,q6,q7,q8,q9,q10,q11,q12,q13,q14,q15,q16];

IK3 =ideal(q1*q8*q15-q3*q5*q16,q4*q6*q15-q2*q7*q16,
q7*q12*q14-q8*q10*q15,q1*q12*q14-q2*q9*q16,q5*q11*q14-q6*q9*q15,
q4*q11*q14-q3*q10*q16,q6*q12*q13-q5*q10*q16,q3*q12*q13-q4*q9*q15,
q8*q11*q13-q7*q9*q16,q2*q11*q13-q1*q10*q15,q2*q8*q13-q4*q5*q14,
q3*q6*q13-q1*q7*q14,q2*q8*q11-q3*q6*q12,q4*q5*q11-q1*q7*q12,
q2*q7*q9-q3*q5*q10,q4*q6*q9-q1*q8*q10,q1*q4*q14*q15-q2*q3*q13*q16,
q6*q8*q13*q15-q5*q7*q14*q16,q1*q6*q12*q15-q2*q5*q11*q16,
q4*q8*q11*q15-q3*q7*q12*q16,q11*q12*q13*q14-q9*q10*q15*q16,
q4*q6*q12*q14-q2*q8*q10*q16,q3*q5*q12*q14-q2*q8*q9*q15,
q1*q8*q11*q14-q3*q6*q9*q16,q2*q7*q11*q14-q3*q6*q10*q15,
q1*q8*q12*q13-q4*q5*q9*q16,q2*q7*q12*q13-q4*q5*q10*q15,
q4*q6*q11*q13-q1*q7*q10*q16,q3*q5*q11*q13-q1*q7*q9*q15,
q3*q8*q10*q13-q4*q7*q9*q14,q2*q6*q9*q13-q1*q5*q10*q14,
q5*q8*q10*q11-q6*q7*q9*q12,q2*q4*q9*q11-q1*q3*q10*q12,
q3*q4*q5*q6-q1*q2*q7*q8);

--LG: permute C and T
LG=ideal(q4-q2,q12-q10,q13-q5,q14-q8,q15-q7,q16-q6);
Inter2=IK3+LG;
mp=minimalPrimes Inter2
--output: a single ideal which coincides with the one for K80
\end{verbatim}

\subsubsection{Quartets}
The M2 code that computes the minimal primes of $I_T^{K81}+I(\LL^{K80})$ for subsection \ref{sec_quartetsK80} is available at:

\url{https://github.com/mcasanellas/Phyloinvariants}

\subsection{From K80 to JC69}\label{sec:app2}
\subsubsection{Tripods}
The follwing M2 code is also available at \url{https://github.com/mcasanellas/Phyloinvariants}
\begin{verbatim}
R = QQ[q1,q2,q3,q4,q5,q6,q7,q8,q9,q10];
IK2 =ideal(q1*q6*q9-q2*q4*q10,q3*q5*q9-q2*q7*q10,q5*q6*q8-q4*q7*q10,
q2*q6*q8-q3*q4*q9,q2*q5*q8-q1*q7*q9,q3*q4*q5-q1*q6*q7,
q1*q3*q9^2-q2^2*q8*q10,q1*q6^2*q8-q3*q4^2*q10,q3*q5^2*q8-q1*q7^2*q10);

-- LG: permute G with C and G with T
LG=ideal(q2-q3,q6-q7,q7-q9,q4-q8,q5-q10);
Inter=IK2+LG;
minimalPrimes Inter
--output. a unique minimal prime which coincides with JC69:
--{ideal(q7-q9,q6-q9,q5-q10,q4-q8,q2-q3,q1*q9^2-q3*q8*q10)}
\end{verbatim}

\end{document}